\documentclass[runningheads]{llncs}

\usepackage[table,dvipsnames]{xcolor}
\usepackage{hyperref}
\usepackage{caption}
\usepackage{amsmath}
\usepackage{amssymb}
\usepackage{amsfonts}
\usepackage{marvosym}
\usepackage{calc}
\usepackage[capitalize]{cleveref}
\usepackage{xspace}
\usepackage{tikz}
\usepackage{subcaption}
\usepackage{wrapfig}
\usepackage{ifthen}
\usepackage{stmaryrd}

\usepackage[colorinlistoftodos,prependcaption,textsize=small]{todonotes}
\usepackage{booktabs}
\usepackage{paralist}

\usepackage{orcidlink}
\usepackage[ruled,noend,linesnumbered]{algorithm2e}

\newif\ifTR
\TRtrue

\usepackage{afterpage}

\usetikzlibrary{shapes.misc, positioning}
\usetikzlibrary{decorations.pathreplacing}
\usetikzlibrary{arrows,automata}
\usetikzlibrary{patterns.meta}
\usetikzlibrary{calc}

\Crefname{algocf}{Algorithm}{Algorithms}

\usetikzlibrary{shapes.geometric}
\usetikzlibrary{backgrounds}
\usetikzlibrary{fit}

\usetikzlibrary{automata}
\usetikzlibrary{arrows.meta}
\usetikzlibrary{bending}
\usetikzlibrary{shapes.callouts}
\usetikzlibrary{quotes}
\usetikzlibrary{positioning}
\usetikzlibrary{calc}
\usetikzlibrary{matrix}

\definecolor{spotblue}{RGB}{31,120,180}
\definecolor{spotpink}{RGB}{255,77,160}
\definecolor{spotorange}{RGB}{255,127,0}
\definecolor{spotpurple}{RGB}{106,61,154}
\definecolor{spotgreen}{RGB}{51,160,44}
\definecolor{spotred}{RGB}{227,26,28}
\definecolor{spotyellowish}{RGB}{196,196,0}
\definecolor{spotgray}{RGB}{80,80,80}
\definecolor{spotlight blue}{RGB}{107,246,255}
\definecolor{spotlight pink}{RGB}{255,154,255}
\definecolor{spotlight orange}{RGB}{255,156,103}
\definecolor{spotlight purple}{RGB}{178,164,255}
\definecolor{spotlight green}{RGB}{167,237,121}
\definecolor{spotlight red}{RGB}{255,104,104}
\definecolor{spotlight yellowish}{RGB}{255,224,64}
\definecolor{spotlight gray}{RGB}{192,192,144}

\tikzset{
  >={Stealth[round,bend]},
}

\tikzstyle{automaton}=[
  semithick,shorten >=0pt,>={Stealth[round,bend]},
  node distance=1.5cm,
  initial text=,
  every initial by arrow/.style={every node/.style={inner sep=0pt}},
  every state/.style={
    align=center,
    fill=white,
    minimum size=7.5mm,
    inner sep=0pt,
    execute at begin node=\strut,
  },%
]
\tikzset{
  scc/.style={draw=gray,fill=black!10,rounded corners=2mm},
  lstate/.style={state},
  }

\tikzstyle{smallautomaton}=[
  automaton,
  node distance=7mm,
  every state/.style={minimum size=4mm,fill=white,inner sep=1.5pt}
]

\tikzstyle{mediumautomaton}=[
  automaton,
  node distance=1cm,
  every state/.style={minimum size=6mm,fill=white,inner sep=2pt}
]

\tikzstyle{cstate}=[state,capsule,text width=,inner xsep=-5pt]
\tikzstyle{dot}=[fill=black,circle,minimum size=4pt,inner sep=0]

\makeatletter
\tikzoption{initial angle}{\tikzaddafternodepathoption{\def\tikz@initial@angle{#1}}}
\makeatother

\tikzstyle{initial overlay}=[every initial by arrow/.append style={overlay}]

\tikzstyle{unreachable} = [densely dotted]

\tikzstyle{acclabel} = [
  fill=yellow!20,
  rounded corners=3pt,
  draw=black,
  inner ysep=0pt,
  inner xsep=3pt,
]
\tikzstyle{ltllabel} = [acclabel,fill=darkgreen!20]

\tikzstyle{namelabel} = [
  execute at begin node = {$($},%
  execute at end node = {$)$}
]

\tikzstyle{matrix of states} = [
matrix of nodes,
every node/.style={state},
column sep=.8cm,
row sep=.8cm,
execute at empty cell={
  \node[transparent,name=
    \tikzmatrixname-\the\pgfmatrixcurrentrow-\the\pgfmatrixcurrentcolumn]{};]},
]

\tikzstyle{accset}=[
  rectangle,inner sep=1.5pt,     
	rounded corners=1.5mm,
  draw=none, 
  solid,
  collacc0, text=white,
  thin,
  anchor=center,
  minimum size=3.3mm,
  text width={}, 
  font=\bfseries\sffamily\footnotesize
]
\tikzstyle{accsquare}=[accset,rectangle,inner sep=1.9pt,rounded corners=0pt]

\tikzset{
  collacc0/.style={fill=spotblue},
  collacc1/.style={fill=spotpink},
  collacc2/.style={fill=spotorange},
  collacc3/.style={fill=spotpurple},
  collacc4/.style={fill=spotgreen},
  collacc5/.style={fill=spotred},
  collacc6/.style={fill=spotyellowish,draw=black,text=black},
  collacc7/.style={fill=spotgray},
  collacc8/.style={fill=spotlight blue,draw=black,text=black},
  collacc9/.style={fill=spotlight pink},
  collacc10/.style={fill=spotlight orange},
  collacc11/.style={fill=spotlight purple},
  collacc12/.style={fill=spotlight green},
  collacc13/.style={fill=spotlight red},
  collacc14/.style={fill=spotlight yellowish},
  collacc15/.style={fill=spotlight gray},
}

\pgfkeyssetvalue{/sacc/where}{center}
\tikzset{
  sacc where/.code={
    \pgfkeyssetvalue{/sacc/where}{#1}
  }
}

\tikzset{
  l/.pic={\node[outer sep=2pt] {#1};},%
  acc/.pic={\node[accset,collacc#1]{\upshape #1};},%
  accsq/.pic={\node[accsquare,collacc#1]{#1};},%
  eacc/.pic={\node[accset,collacc#1]{\emptyacc};},%
  eaccsq/.pic={\node[accsquare,collacc#1]{\emptyacc};},%
  sacc/.style = {
    append after command=
      {pic at (\tikzlastnode.\pgfkeysvalueof{/sacc/where}) {acc=#1}}
  },
  saccsq/.style = {
    append after command=
      {pic at (\tikzlastnode.\pgfkeysvalueof{/sacc/where}) {accsq=#1}}
  },
  esacc/.style = {
    append after command=
      {pic at (\tikzlastnode.\pgfkeysvalueof{/sacc/where}) {eacc=#1}}
  },
  esaccsq/.style = {
    append after command=
      {pic at (\tikzlastnode.\pgfkeysvalueof{/sacc/where}) {eaccsq=#1}}
  },
  pics/cacc/.style 2 args={%
    code={\node[accset,collacc#1]{#2};}%
  },%
  pics/caccsq/.style 2 args={%
      code={\node[accsquare,collacc#1]{#2};}%
    },%
  csacc/.style 2 args = {%
      append after command=%
        {pic at (\tikzlastnode.\pgfkeysvalueof{/sacc/where}) {cacc={#1}{#2}}}%
  },%
  csaccsq/.style 2 args = {%
      append after command=%
        {pic at (\tikzlastnode.\pgfkeysvalueof{/sacc/where}) {caccsq={#1}{#2}}}%
  },%
}

\def\markbaseline{-.33em}
\def\tacc#1{\tikz[baseline=\markbaseline]\pic{acc=#1};\xspace}
\def\tcacc#1#2{\tikz[baseline=\markbaseline]\pic{cacc={#1}{\upshape #2}};\xspace}

\def\emptyacc{\phantom{0}}

\tikzstyle{removed} = [opacity=0, overlay]
\tikzstyle{SCC} = [
  rounded corners,
  draw=black!50, very thin,
  fill=black!7
]
\tikzstyle{trivial} = [dashed]
\tikzstyle{sccname} = [red,anchor=south east,outer xsep=13pt]

\newcommand{\ol}[1]{\textcolor{blue}{\ifmmode \text{[OL: #1]}\else [OL: #1] \fi}}
\newcommand{\vh}[1]{\todo[linecolor=red,backgroundcolor=red!25,bordercolor=red]{VH: #1}}
\newcommand{\ly}[1]{\textcolor{orange}{\ifmmode \text{[LY: #1]}\else [LY: #1] \fi}}
\newcommand{\lh}[1]{\textcolor{pink}{\ifmmode \text{[LH: #1]}\else [LH: #1] \fi}}
\newcommand{\nm}[1]{\textcolor{brown}{\ifmmode \text{[NM: #1]}\else [NM: #1] \fi}}

\newcommand{\buchi}[0]{B\"{u}chi\xspace}

\newcommand{\aut}[0]{\mathcal{A}}
\newcommand{\but}[0]{\mathcal{B}}

\newcommand{\alphabet}[0]{\Sigma}
\newcommand{\word}[0]{w}
\newcommand{\wordof}[1]{\word_{#1}}
\newcommand{\trans}[0]{\delta}
\newcommand{\acc}[0]{\mathit{Acc}}
\newcommand{\acctrans}[0]{\trans_{\!\acc}}
\newcommand{\ltr}[1]{\mathrel{\raisebox{0pt}[0pt][0pt]{\ensuremath{\xrightarrow{\raisebox{-5mm}{$\scriptstyle#1$}}}}}}

\newcommand{\lang}[0]{\mathcal{L}}
\newcommand{\langof}[1]{\lang(#1)}

\newcommand{\restrof}[2]{#1 \raisebox{-.5ex}{$|$}_{#2}}
\newcommand{\colours}[0]{\Gamma}
\newcommand{\colouring}[0]{\mathsf{p}}
\newcommand{\colouringof}[1]{\colouring(#1)}

\newcommand{\infsymb}[0]{\mathit{inf}}

\newcommand{\inftof}[1]{\infsymb(#1)}
\newcommand{\Inf}[0]{\mathit{Inf}}
\newcommand{\Infof}[1]{\Inf(#1)}

\newcommand{\Fin}[0]{\mathit{Fin}}
\newcommand{\Finof}[1]{\Fin(#1)}

\newcommand{\alg}[0]{\mathtt{Alg}}

\newcommand{\GetSucc}[0]{\mathtt{Succ}}

\newcommand{\iw}{\mathcal{W}}

\newcommand{\nac}{\mathcal{N}}

\newcommand{\pspace}[0]{\textsc{PSpace}\xspace}

\newcommand{\kofola}[0]{\textsc{Kofola}\xspace}
\newcommand{\kofolaslice}[0]{\textsc{KofolaSlice}\xspace}
\newcommand{\kofolaold}[0]{\textsc{KofolaOld}\xspace}

\newcommand{\spot}[0]{\textsc{Spot}\xspace}
\newcommand{\spotdet}[0]{\textsc{Spot(Det)}\xspace}
\newcommand{\spotforq}[0]{\textsc{Spot(Forq)}\xspace}

\newcommand{\rabit}[0]{\textsc{Rabit}\xspace}
\newcommand{\bait}[0]{\textsc{Bait}\xspace}
\newcommand{\forklift}[0]{\textsc{Forklift}\xspace}
\newcommand{\owl}[0]{\textsc{Owl}\xspace}
\newcommand{\autohyper}[0]{\textsc{AutoHyper}\xspace}
\newcommand{\automizer}[0]{\textsc{Automizer}\xspace}
\newcommand{\pecan}[0]{\textsc{Pecan}\xspace}
\newcommand{\concur}[0]{\textsc{Concur}\xspace}
\newcommand{\ranker}[0]{\textsc{Ranker}\xspace}
\newcommand{\ldbaforltl}[0]{\textsc{Ldba4Ltl}\xspace}
\newcommand{\seminator}[0]{\textsc{Seminator}\xspace}
\newcommand{\stateofbuchi}[0]{\textsc{Sobc}\xspace}

\newcommand{\RGBcircle}[1]{\ensuremath{{\color[RGB]{#1}\bullet}}}

\newcommand{\lnref}[1]{Line~\ref{#1}}

\newcommand{\autohyperbenchc}[0]{$\textsc{AutoHyper}_{\mathcal{C}}$\xspace}
\newcommand{\automizerbenchc}[0]{$\textsc{Automizer}_{\mathcal{C}}$\xspace}
\newcommand{\pecanbenchc}[0]{$\textsc{Pecan}_{\mathcal{C}}$\xspace}
\newcommand{\sones}[0]{\textsc{S1S}\xspace}
\newcommand{\autohyperbenchi}[0]{$\textsc{AutoHyper}_{\mathcal{I}}$\xspace}
\newcommand{\automizerbenchi}[0]{$\textsc{Automizer}_{\mathcal{I}}$\xspace}
\newcommand{\pecanbenchi}[0]{$\textsc{Pecan}_{\mathcal{I}}$\xspace}

\newcommand{\redfattimes}[0]{\textcolor{red}{$\pmb{\pmb{\times}}$}}

\usepackage{todonotes}

\title{\kofola 1.0: A~Modular Approach to $\omega$-Regular Complementation and
Inclusion Checking\ifTR\\(Technical Report)\fi}

\titlerunning{\kofola 1.0: Modular $\omega$-Regular Complementation and
Inclusion Checking}

\author{
  Ondrej Alexaj\orcidlink{0009-0009-3994-4563}\Letter\inst{1} \and
  Vojtěch Havlena\orcidlink{0000-0003-4375-7954}\Letter\inst{1} \and
  Lukáš Holík\orcidlink{0000-0001-6957-1651}\Letter\inst{1,2} \and\\
  Ondřej Lengál\orcidlink{0000-0002-3038-5875}\Letter\inst{1} \and
  Yong Li\orcidlink{0000-0002-7301-9234}\Letter\inst{3} \and
  Nicolas Mazzocchi\orcidlink{0000-0001-6425-5369}\Letter\inst{4}
} 

\authorrunning{O. Alexaj, V. Havlena, L. Holík, O. Lengál, Y. Li, N. Mazzocchi}

\institute{
  Brno University of Technology, Brno, Czech Republic \and
  Aalborg University, Aalborg, Denmark \and
  Key Lab. of System Software (Chinese Academy of Sciences), ISCAS, Beijing, PRC \and
  Slovak University of Technology in Bratislava, Bratislava, Slovakia
  \email{xalexa09@stud.fit.vut.cz, ihavlena@fit.vut.cz, holik@fit.vut.cz, lengal@fit.vut.cz, liyong@ios.ac.cn, nicolas.mazzocchi@stuba.sk}
}

\begin{document}
\maketitle

\begin{abstract}
We present \kofola, an efficient tool for complementation and inclusion
checking of \buchi automata, two central tasks in automata-theoretic
verification with applications in model checking, monitoring, and theorem
proving. \kofola implements a state-of-the-art modular complementation
framework that decomposes the input automaton into strongly connected
components and applies to each component a complementation algorithm tailored
to its structural properties.
Building on this modular construction, \kofola also provides modular
\emph{inclusion checking} with new heuristics. A key ingredient is a new
on-the-fly emptiness-checking algorithm for the simple generalized Rabin pair
condition produced by our complementation, allowing the search to terminate as
soon as the explored state space suffices. Empirical evaluation shows that \kofola is
highly competitive with state-of-the-art complementation and inclusion-checking
tools: it is the most robust tool in our evaluation and often outperforms
competitors by several orders of magnitude on benchmarks from practical
applications.
\end{abstract}

\section{Introduction}


Language inclusion checking for Büchi automata (BAs) is a cornerstone problem in automata-theoretic verification, with applications to model checking~\cite{DBLP:conf/hvc/Vardi08}, monitoring~\cite{DBLP:conf/ictac/DiekertMW15}, and theorem proving~\cite{pecan,DBLP:journals/lmcs/HieronymiMOSSS24}.
The problem is \pspace-com\-plete~\cite{inclusionPSPACE}, but it remains one of the most computationally demanding tasks in verification practice.
As systems and specifications continue to grow in size and complexity, the scalability of inclusion checking has become a decisive factor in the practicality of automata-based reasoning frameworks.
For two $\omega$-regular languages $L_1$ and $L_2$, deciding whether $L_1 \subseteq L_2$ reduces to checking the emptiness of $L_1 \cap \overline{L_2}$, where $\overline{L_2}$ denotes the complement of $L_2$.
This exposes two key challenges:
\begin{inparaenum}[(i)]
  \item  constructing $\overline{L_2}$ compactly, and
  \item  efficiently exploring the (often enormous) state space of the product
    automaton for $L_1 \cap \overline{L_2}$.
\end{inparaenum}
Addressing either of these challenges translates into
performance improvements \mbox{in the considered domains.}

Existing tools primarily tackle these challenges by improving either the complementation of $L_2$ or the exploration for the product automaton.
For example, tools such as \bait~\cite{DoveriGPR21}, \forklift~\cite{forklift}, and \rabit~\cite{AbdullaCCHHMV11,AbdullaCCHHMV10} use Ramsey-based complementation~\cite{Buchi62,SistlaVW87}, while improving efficiency through state-space reduction techniques---such as subsumption and simulation in \rabit or well-quasiorders in \bait and \forklift.
In~contrast, \spot~\cite{Duret-LutzRCRAS22} relies on determinization-based complementation~\cite{Safra88,Piterman07,Redziejowski12}, enhanced by lightweight simulation and specialized emptiness checking procedures for Emerson–Lei acceptance conditions~\cite{BaierBD00S19}.
We observe that state-of-the-art inclusion checkers still largely rely on
classical Ramsey- or determinization-based complementation constructions,
which date back more than three decades~\cite{Buchi62,Safra88}, despite
substantial recent progress in \buchi complementation. This motivates an
efficient and extensible framework that brings modern complementation
techniques to practical inclusion checking.

\smallskip\noindent\textbf{Contributions.} 
We present \kofola~\cite{kofolaweb}, an efficient and modular tool for BA
complementation and language inclusion checking, which is based on a recent
decomposition-based complementation algorithm~\cite{HavlenaLLST23}. For an
inclusion query $\langof{\aut_1} \subseteq \langof{\aut_2}$, \kofola performs
a fine-grained structural analysis of the complemented automaton, namely
$\aut_2$, by classifying its maximal strongly connected components (SCCs) as:
\begin{inparaenum}[(i)]
  \item non-accepting components;
  \item inherently weak accepting components (IWACs), in which all cycles accept;
  \item initial almost deterministic accepting components (IADACs) and
        deterministic accepting components (DACs), which have only deterministic
        transitions (a~IADAC is a deterministic SCC reachable from initial
        states through deterministic SCCs only, allowing nondeterminism just on
        cycle-free parts between SCCs); and
  \item the remaining accepting SCCs, classified as nondeterministic accepting
        components (NACs).
\end{inparaenum}
Notably, IADACs are a \emph{newly identified component type} introduced in this paper. For each
component type, \kofola exploits its structural properties by applying a
dedicated complementation algorithm, combining the component constructions
in a synchronous manner.

For inclusion checking, \kofola constructs an $\omega$-automaton with a single
generalized Rabin pair acceptance condition,
$\Finof{\tacc 0} \land \Infof{\tacc 1} \land \cdots \land
\Infof{\tcacc 7 {k-1}}$, and performs \emph{on-the-fly} emptiness checking
during the modular construction. This avoids explicitly generating the full
product automaton. Although \spot can also check emptiness for generalized
Rabin conditions~\cite{BaierBD00S19}, it requires the whole automaton to be
constructed explicitly. To mitigate state-space explosion, \kofola implements
a new emptiness-checking algorithm for the single generalized Rabin pair
condition. The algorithm can be viewed as a variant of Couvreur's
algorithm~\cite{Couvreur99} extended with a $\Fin$ predicate. It is maximally
lazy: the search terminates as soon as the explored states suffice to decide
the result, helping to keep the generated part of the complement small, which is
crucial since complementation is the main source of blow-up in inclusion
checking.



We evaluated \kofola on complementation benchmarks drawn from diverse
literature sources, comparing it with state-of-the-art tools \spot,
\ranker~\cite{HavlenaLS22b}, and an initial implementation of
decomposition-based complementation~\cite{HavlenaLLST23}. \kofola outperforms
all competitors both in the number of solved instances and in the size of the
resulting complement automata.
We further evaluated \kofola on BA inclusion benchmarks arising from program
termination checking, verification of concurrent systems, model checking of
hyperproperties, and theorem proving. The results show that \kofola outperforms
the state-of-the-art inclusion checkers \bait, \forklift, \rabit, and \spot,
in many cases by several orders of magnitude, and \kofola solves the most instances. These
results establish \kofola as an efficient, extensible, and highly competitive
tool for BA complementation and inclusion checking.
\ifTR\else
More details are given in~\cite{techrep}.
\fi


%

\smallskip\noindent\textbf{Related work.} 
BA complementation algorithms can be broadly grouped into several families.
\emph{Ramsey-based}~\cite{Buchi62,SistlaVW87,BreuersLO12,DBLP:conf/fm/LiTTVZ21},
\emph{rank-based}~\cite{KupfermanV01,FriedgutKV06,Schewe09,KarmarkarC09,ChenHL19,HavlenaL21,HavlenaLS22a,HavlenaLS22b},
and \emph{slice-based}~\cite{slice_based,TsaiFVT14,AllredU18} procedures encode
information about word acceptance using different data structures.
\emph{Determinization-based} constructions first determinize the automaton and then
complement its acceptance condition, typically producing automata with acceptance
conditions that are more complex than \buchi~\cite{Safra88,Piterman07,Redziejowski12,DBLP:conf/atva/LodingP19,LiTFVZ22}.
\emph{Learning-based}
approaches~\cite{LiTZS18} infer a BA for the complement through membership
queries. More recent \emph{decomposition-based}
constructions~\cite{LiTFVZ22,HavlenaLLST23} partition the automaton according
to structural properties and apply specialized complementation procedures to
each block. \kofola builds on this decomposition-based approach, introduces the
new IADAC component type, and, as our experiments show, improves over previous
implementations~\cite{LiTFVZ22,HavlenaLLST23} as well as
rank- and determinization-based tools such as \ranker and \spot.

Several works also optimize BA language-inclusion testing, mostly within
Ramsey-based frameworks. \rabit~\cite{AbdullaCCHHMV11,AbdullaCCHHMV10} uses
simulation and subsumption, both within and across automata, to prune the
searched state space. Well-quasiorder-based approaches~\cite{DoveriGPR21,forklift}
are more symbolic: to decide $\langof{\aut_1} \subseteq \langof{\aut_2}$, they
perform a breadth-first fixpoint search over ultimately periodic words of
$\langof{\aut_1}$, using a well-quasiorder to discard words subsumed by already
processed ones.

\newcommand{\figKofolaArch}[0]{
    \ifTR
  \begin{wrapfigure}[11]{r}{50mm}
    \else
  \begin{wrapfigure}[12]{r}{50mm}
    \fi
    \ifTR
    \vspace*{-5mm}
    \fi
  \vspace*{-15mm}
  \hspace*{-3mm}
  \begin{minipage}{60mm}
  \resizebox{0.89\textwidth}{!}{
   \begin{tikzpicture}[
  auto,
  transform shape,
  node distance=20mm,
  >=stealth'
]
\tikzstyle{block}=[draw,rectangle,rounded corners=2mm]
\tikzstyle{fork}=[decorate, decoration={show path construction, lineto code={
        \draw(\tikzinputsegmentfirst)|-($(\tikzinputsegmentfirst)!.5!(\tikzinputsegmentlast)$)-|(\tikzinputsegmentlast);}}]

\filldraw node[block,fill=magenta!20,minimum width=55mm,minimum height=8mm] (frontend) at (0,0) {Frontend};
\filldraw node[block,fill=brown!20,minimum width=10mm,minimum height=42mm,left of=frontend,xshift=-20mm,yshift=-17mm] (spot) {\textsc{Spot}};
\filldraw node[block,fill=cyan!20,minimum width=55mm,minimum height=8mm,below of=frontend,yshift=9mm] (preproc) {Preprocessing and setup};
\filldraw node[block,fill=blue!20,minimum width=25mm,minimum height=14mm,below of=preproc,xshift=-15mm,yshift=0mm] (compl) {\begin{tabular}{c}Complement\\{\small (top level)}\end{tabular}};
\filldraw node[block,fill=green!20,minimum width=25mm,minimum height=14mm,below of=preproc,xshift=14mm,yshift=0mm] (incl) {\begin{tabular}{c}Inclusion\\{\small (on-the-fly }\\{\small emptiness check)}\end{tabular}};
\filldraw node[block,fill=black!10,minimum width=55mm,minimum height=8mm,below of=preproc,yshift=-15mm] (succ) {Successor generation};

\filldraw node[block,fill=red!20,minimum width=10mm,minimum height=8mm,below of=succ,xshift=-22.4mm,yshift=8mm] (succ1) {Succ${}_1$};
\filldraw node[block,fill=orange!20,minimum width=10mm,minimum height=8mm,below of=succ,xshift=-8mm,yshift=8mm] (succ2) {Succ${}_2$};
\node[minimum width=10mm,minimum height=8mm,below of=succ,xshift=7mm,yshift=8mm] (dots) {\ldots};
\filldraw node[block,fill=violet!20,minimum width=10mm,minimum height=8mm,below of=succ,xshift=22.4mm,yshift=8mm] (succn) {Succ${}_n$};

\node[above of=frontend,xshift=-10mm,yshift=-10mm] (a1) {$\aut_1$\texttt{.hoa}};
\node[above of=frontend,xshift=10mm,yshift=-10mm] (a2) {$\aut_2$\texttt{.hoa}?};

\node[below of=spot,yshift=-10mm] (out) {$\aut_1^c$\texttt{.hoa}};

\coordinate[below of=preproc,yshift=11.5mm] (fork);
\draw[->] (frontend.176) |- (spot.74);
\draw[->] (spot.70) |- (frontend.184);

\draw[->] (frontend) -- (preproc);

\draw[->] (preproc.176) |- (spot.57);
\draw[->] (spot.40) |- (preproc.184);


\draw (preproc) -- (fork);
\draw[->] (fork) -| (compl.90);
\draw[->] (fork) -| (incl.90);

\draw[->] (compl.221) -| (succ.170);
\draw[->] (succ.149) -| (compl.319);

\draw[->] (incl.221) -| (succ.35);
\draw[->] (succ.10) -| (incl.319);

\draw[->] (compl.180) |- (spot.291);
\draw[->] (spot.270) -- (out);

\draw[->] (succ.189) -| (succ1.120);
\draw[->] (succ1.60) -| (succ.191);
\draw[->] (succ.201) -| (succ2.120);
\draw[->] (succ2.60) -| (succ.216);
\draw[->] (succ.349) -| (succn.120);
\draw[->] (succn.60) -| (succ.351);

\draw[->] (a1) -- (frontend.158);
\draw[->] (a2) -- (frontend.22);

\end{tikzpicture}
  }
  \vspace{-2mm}
  \end{minipage}
  \caption{Architecture of \kofola}
  \label{fig:arch}
  \end{wrapfigure}
}

\section{Tool Architecture}\label{sec:arch}

\figKofolaArch  
\kofola is implemented in C++ and available at~\cite{kofolaweb} under the
GNU GPL~3.0 license. 
Its architecture is shown in \cref{fig:arch}.
The \emph{Frontend} uses \spot~\cite{Duret-LutzRCRAS22} to read 
input automata~$\aut_1$\texttt{.hoa} and, for inclusion checking only, $\aut_2$\texttt{.hoa}, both in the
\texttt{HOA}~\cite{BabiakBDKKM0S15} format.
Next, the automata enter \emph{Preprocessing and setup}, where
\begin{inparaenum}[(i)]
  \item  \spot reduces them using either \texttt{Low} or \texttt{High} reduction level, depending on their features and
  \item  \kofola then analyzes their structure and chooses partial complementation
    algorithms to be used. 
\end{inparaenum}
Then, the automaton/automata
are passed to top-level procedures for
\emph{Complement} (see \cref{sec:compl-alg}) or \emph{Inclusion}, which orchestrate the chosen partial complementation
algorithms (see \cref{sec:partial-compl}); for inclusion checking, this includes a top-level on-the-fly emptiness checking algorithm (see
\cref{sec:inclusion}).
The main workhorse of the procedures is \emph{Successor generation}
used in generating reachable states of the complement automaton, 
which calls the successor functions \emph{Succ}$_1$, \ldots, \emph{Succ}$_n$ of
the partial complementation algorithm and combines results.
For complementation, the resulting automaton is postprocessed by \spot
\mbox{(reduction level \texttt{Low}) and output as~$\aut_1^c$\texttt{.hoa}.}

\section{Preliminaries}\label{sec:label}

We assume familiarity with standard notions on $\omega$-automata; see, e.g.,~\cite{EsparzaB23}.
We only introduce definitions specific to this paper.
Fix a~finite non-empty alphabet~$\alphabet$, and let~$\omega$ be the first
infinite ordinal.
An~(infinite) word is a sequence $\word = \wordof 0
\wordof 1 \dots$ over~$\alphabet$;
the set of all infinite words over~$\alphabet$
is denoted by $\alphabet^\omega$.

A~(nondeterministic transition-based) \emph{simple generalized Rabin automaton}
(SGRA)
over~$\Sigma$ is a~quintuple $\aut = (Q, \trans, I, \colours, \colouring)$
where $Q$ is a~finite set of \emph{states}, $\trans \subseteq Q \times \Sigma \times
Q$ is a~set of \emph{transitions} (denoted as $q \ltr a r$), $I \subseteq Q$ is the set of
\emph{initial} states, $\colours = \{\tacc 0, \ldots, \tcacc 7 {k-1}\}$ is a~set of~$k$
\emph{colours}, and $\colouring\colon \delta \to 2^\colours$ is a~colouring
function on transitions.
For a~set of states~$C \subseteq Q$, we write~$\restrof{\trans}{C}$ to denote the set of transitions $\{q \ltr a r \in \trans \mid q,r \in C\}$.
We call~$\aut$ \emph{deterministic} if~$|I| = 1$ and 
for every~$a\in \alphabet$ and~$q \in Q$ it holds that $|\{p \mid \in (q, a, p) \in \trans\}| \leq 1$.
When~$\tacc 0$ has no occurrence in~$\colouring$, $\aut$ is a \emph{generalized \buchi automaton} (GBA).
When a~GBA has $\colours = \{\tacc 0, \tacc 1\}$, it is a~\emph{\buchi automaton} (BA).
We~denote a~BA as $\aut = (Q, \trans, I, \acctrans)$ where
$\acctrans = \{\tau \in \trans \mid \colouringof \tau = \{\tacc 1\}\}$.
A~\emph{run}
of~$\aut$ from~$q \in Q$ on $\word\in\alphabet^\omega$ is an infinite sequence $\rho\colon
\omega \to Q$ that $\rho_0 = q$ and
$\rho_i \ltr{\wordof i}\rho_{i+1} \in \trans$ for all $i \geq 0$. 
Let $\inftof \rho$ denote the transitions occurring in~$\rho$ infinitely often.
The run~$\rho$ is called \emph{accepting} iff $\bigcup \{\colouringof \tau \mid \tau
\in \inftof \rho\} = \colours \setminus \{\tacc 0\}$.
Equivalently, in the Emerson--Lei formalism~\cite{EmersonL87,BabiakBDKKM0S15},
SGRAs have acceptance condition $\Finof{\tacc 0} \land \Infof{\tacc 1} \land
\ldots \land \Infof{\tcacc 7 {k-1}}$.
A word $\word$ is accepted by $\aut$ from a state $q$ if there exists an
accepting run of $\aut$ from $q$ on $\word$. We write $\langof{\aut}$ for the
language accepted from the initial states $I$. A state $q$ is \emph{useless}
if no word is accepted from $q$.

For a BA $\aut$, a non-empty set $C\subseteq Q$ is a \emph{strongly connected
component} (SCC) if all states in~$C$ can reach each other and~$C$ is a~maximal such set.
An SCC is \emph{trivial} if it is a singleton without a
self-loop, and \emph{non-trivial} otherwise.
An~SCC~$C$ is
\emph{accepting} if it contains at least one $\tacc 1$-transition and
\emph{inherently weak} iff either
\begin{inparaenum}[(i)]
  \item  every run in~$\aut$ restricted to~$C$ is accepting, or 
  \item  no run in~$\aut$ restricted to~$C$ is accepting.
\end{inparaenum}
An SCC~$C$ is \emph{deterministic} iff the BA $(C, \restrof{\trans}{C},
\{q\}, \emptyset)$ for any~$q \in C$ (obtained from~$\aut$ by
removing transitions outside~$C$) is deterministic, and it
is \emph{initial deterministic} if the BA obtained from
the BA $(Q, \trans, I, \restrof{\acctrans} C)$ (i.e., $\aut$~with accepting
transitions outside~$C$ removed) by removing useless states is
deterministic.
In particular, a~non-accepting SCC is not initial deterministic since the resulting BA has no initial states.
We also generalize the notion of initial deterministic components to
\emph{initial almost deterministic components}, which are components~$C$ such
that in the BA obtained from the BA $(Q, \trans, I, \restrof{\acctrans} C)$
by removing useless states, it holds that for any two transitions from the same
state over the same symbol $q \ltr a p$ and $q \ltr a r$, either $p=r$ or the
states~$p$ and~$r$ are not in the same SCC as the state~$q$.
Hence, such components are reachable from initial states on a path going through deterministic
components with nondeterminism allowed only outside non-trivial SCCs.
We use the following categorization of accepting SCCs of a~BA:
\begin{inparaenum}[(i)]
  \item \textbf{IADACs} are initial almost deterministic accepting SCCs,
  \item \textbf{IWACs} are inherently weak accepting SCCs that are not IADACs,
  \item \textbf{DACs} are deterministic accepting SCCs not covered by the above, and
  \item \textbf{NACs} are the remaining (nondeterministic) accepting SCCs.
\end{inparaenum}
A~BA~$\aut$ is called an \emph{elevator automaton}~\cite{HavlenaLS22a} if it contains no NAC.
We assume that for all transitions $\tau = q \ltr a r \in \delta$ not
within a~single SCC, we have $\colouringof \tau = \emptyset$ (this is without
loss of generality since no run can loop over such transitions).
A~\emph{partition block} $P \subseteq Q$ of~$\aut$ is a~nonempty union of~$\aut$'s
accepting SCCs, and a~\emph{partitioning} of~$\aut$ is a~sequence $P_1, \ldots,
P_n$ of pairwise disjoint \mbox{partition blocks of~$\aut$ that contains all accepting SCCs
of~$\aut$.}

\section{Decomposition-based \buchi Complementation}\label{sec:complementation}

We fix a~BA $\aut = (Q, \trans, I, \acctrans)$ for the rest of this section.
We first give a~brief overview of the complementation
framework proposed in~\cite{HavlenaLLST23} and then present our choices of partial
complementation algorithms for each type of partition blocks.
We also extend the framework of~\cite{HavlenaLLST23} with a~new partial
complementation algorithm for IADACs.


\subsection{Top-level Modular Complementation Algorithm}\label{sec:compl-alg}

Inspired by~\cite{LiTFVZ22}, a modular methodology was proposed in \cite{HavlenaLLST23} for complementing~BAs.
In a nutshell, the modular complementation algorithm developed in~\cite{HavlenaLLST23} first classifies the SCCs in~$\aut$ into several partition blocks according to their structural properties, then performs
complementation for each of the partition blocks independently, while
synchronizing the complementation algorithms for all partition blocks in each step.
The intuition for complementing SCCs separately is that every run of~$\aut$
eventually gets trapped in one SCC, so we only need to consider the runs staying in the given SCC.
For those runs that exit the given SCC at some point, we handle them as discontinued runs since they must eventually stay in another SCC. 
The complement automaton~$\aut^c$ accepts a~word over which all runs of~$\aut$ are rejecting, so we only need to make sure that all runs in \emph{accepting} SCCs over the word do not visit accepting transitions infinitely often.
That is, we check only the runs in IADACs, IWACs, DACs, and NACs. 
Runs in rejecting SCCs do not require checking acceptance, so we just track them using traditional subset construction.
We choose to decompose $\aut$ into partition blocks where
one partition block contains all IADACs,
one contains all IWACs,
one contains all DACs,
and each NAC has
a~separate partition block. However, thanks to the modularity of the complementation framework,
we can decompose $\aut$ in other ways and even use different partial complementation algorithms for partition blocks of the same type.
The modular complementation algorithm uses a partial algorithm $\alg^P$ for each partition block $P$.
Each partial algoritm $\alg^P$ specifies the following:
\begin{inparaenum}[(i)]
  \item the type of macrostates produced by the algorithm, 
  \item the set of initial macrostates,
  \item a~function $\GetSucc$ returning the successors of a macrostate, possibly together with a transition color, and
  \item the acceptance condition for \mbox{the partial algorithm.}
\end{inparaenum}%
\begin{wrapfigure}[7]{r}{0.5\textwidth}
	\centering
	\vspace{-11.5mm}
	\usetikzlibrary{calc}

\begin{tikzpicture}[
  auto,
  transform shape,
  node distance=20mm,
  >=stealth',
  scale=0.60
]

\newcommand{\wS}{12mm}
\newcommand{\wM}{36mm}
\newcommand{\wDots}{8mm}

\node[rectangle split, rectangle split horizontal, rectangle split parts=4,
  rectangle split draw splits=true, draw, rounded corners=5pt,
  align=center] (pred) {\makebox[\wS][c]{$S$}
  \nodepart{two} \makebox[\wM][c]{$M_1$}
  \nodepart{three} \makebox[\wM][c]{$M_2$}
  \nodepart{four} \makebox[\wDots][c]{$\cdots$}};

\foreach \i/\prt in {1/two,2/three} {
  \node[font=\small, anchor=south] at ([yshift=1mm]pred.\prt\space north) {$\alg^{P_{\i}}$};
  \node[draw, rounded corners=3pt, font=\small, inner sep=3pt, anchor=north] (alg\i)
    at ([yshift=-8mm]pred.\prt\space south) {$\GetSucc^{P_{\i}}$};

  \draw[-stealth, dashed] ([xshift=-7mm]alg\i.west) -- node[above] {$a$} (alg\i.west);

  \ifthenelse{\equal{\i}{1}}{
  \node[font=\small, inner sep=2pt, anchor=north,yshift=0.4mm] (mods\i)
    at ([yshift=-7mm]alg\i.south) {${}\times\{(M_{\i}', \{\tacc 1\}), (M_{\i}'',\emptyset), \ldots\}$};
  }{
  \ifthenelse{\equal{\i}{2}}{
  \node[font=\small, inner sep=2pt, anchor=north,yshift=0.4mm] (mods\i)
    at ([yshift=-7mm]alg\i.south) {${}\times\{(M_{\i}', \emptyset), (M_{\i}'', \{\tacc 2\}), \ldots\}$};
  }}

  \node at ($(alg\i.south)!0.5!(mods\i.north)$) {$\Downarrow$};

  \coordinate (predSplitSouth\i) at (pred.\prt\space south);
  \draw[-stealth] (predSplitSouth\i) -- (alg\i.north);
}

\coordinate (predFirstSplitSouth) at (pred.one\space south);

\foreach \j in {1,2} {
  \draw[-stealth, dashed, gray] (predFirstSplitSouth) to[out=340,in=140] (alg\j.north);
}

\node[inner sep=2pt, anchor=north] (alg3) at ([yshift=-9.5mm]pred.four\space south) {$\cdots$};

\node[font=\small, inner sep=2pt, anchor=north] (mods3)
    at ([yshift=-8.3mm,xshift=0.5mm]alg3.south) {${}\times\hspace{0.1cm}\cdots\hspace{0.3cm}$};
\node[font=\small, inner sep=2pt, anchor=north] (deltas)
    at ([yshift=-20mm, xshift=-0mm]pred.one\space south) {$\{\!\underset{s\in S}{\cup}\!\delta(s,a) \}$};
\draw[-stealth] (pred.one\space south) -- (deltas.north);



\end{tikzpicture}
	\vspace{-7mm}
	\caption{Visualization of macrostate's successor computation over symbol $a$.}
	\label{fig:succ}
\end{wrapfigure}%
The input of the top-level algorithm are, except the automaton itself, also partitions $P_1,\ldots,P_k$ of the input automaton and the 
partial algorithms $\alg^{P_1},\ldots, \alg^{P_k}$. The macrostates of the top-level algoritm are then of the form $(S, M_1, \dots, M_k)$, where 
$S$ is the set of reachable states and $M_1,\ldots, M_k$ are macrostates of the partial algorithms. The successors of a~macrostate 
are built from the results of partial algorithms' successor functions (and
colors from partial successor functions are gathered on the transition)
as shown in \cref{fig:succ}. The acceptance condition of~$\aut^c$ is given as 
a~conjunction of acceptance conditions of the partial algorithms.

\subsection{Partial Complementation Algorithms}\label{sec:partial-compl}
Next, we briefly describe partial algorithms to complement partition blocks with
IWACs, DACs, and NACs. Details and formal definitions can be found in~\cite{HavlenaLLST23}.
Finally, we also present a new partial complementation algorithm for partition blocks with IADACs only.
The sole source of the $\Finof {\tacc 0}$ acceptance condition comes from IADACs, 
making the resulting automaton an SGRA.

\smallskip
\noindent\textbf{IWACs.}
The algorithm for the partition block $\iw$ that contains all IWACs exploits the classical \textit{Miyano-Hayashi} construction~\cite{MIYANO1984321}, generating
the acceptance condition~$\Infof{\tacc 1}$.

\smallskip
\noindent\textbf{DACs.}
For the partition block with DACs, we use a modification of the NCSB
algorithm~\cite{BlahoudekDS20,ChenHLLTTZ18,HavlenaLS22b}, which generates the acceptance
condition $\Infof{\tacc 2}$.

\smallskip
\noindent\textbf{NACs.}
For partition blocks containing a~NAC~$\nac$, \kofola supports two algorithms:
\begin{inparaenum}[(i)]
  \item  determinization-based (based on the
    Safra-Piterman's~\cite{Safra88,Piterman07} algorithm, specifically its
    version
    from~\cite{LiTFVZ22}), which translates~$\nac$ into a~deterministic
    parity automaton, and
  \item  a~slice-based approach from~\cite{AllredU18} (this approach generates
    for a~given NAC the acceptance condition $\Infof {\tcacc 7 {k}}$ with $k > 2$; we
    note that \tacc 0, \tacc 1, and \tacc 2 are reserved for the procedures for
    IADACs, IWACs, and DACs respectively).
\end{inparaenum}
%
Each of the algorithms is used in a~different setting.
The determinization-based algorithm is used in \emph{complementation}, since it
usually produces smaller results.
On the other hand, the slice-based algorithm is used in \emph{inclusion
checking}, since it is more suitable for on-the-fly product emptiness~test.

\smallskip
\noindent\textbf{IADACs.}
Let us now generalize the partial algorithm for complementing initial
deterministic components (IDCs) from~\cite{HavlenaLLST23} to IADACs.
We show the construction on a BA consisting of IADACs only. 
Putting the construction into the formal framework of~\cite{HavlenaLLST23} is 
straightforward. Given a BA $\aut = (Q, \trans, I, \acctrans)$ consisting of 
IADACs only, the complement SGRA with the acceptance condition $\Finof{\tacc 0}$ 
is given as $\but = (2^Q, \Delta, \{I\}, \{\tacc 0\}, \colouring)$ where 
%
$\Delta = \{R \ltr a S \mid \bigcup_{r \in R} \trans(r, a) = S\}$
and
for a transition $t = R \ltr a {S}$ the coloring function $\colouring$ 
is given as $\colouring(t) = \{ \tacc 0 \}$ if $ \bigcup_{r \in R} \acctrans(r, a) \neq \emptyset$ 
otherwise $\colouring(t) = \emptyset$.

Intuitively, the algorithm determinizes the IADACs using a subset construction
and emits a~rejecting (due to the $\Fin$ condition) mark~$\tacc 0$ whenever
some tracked run sees an accepting transition.
The key argument for correctness is that the number of runs in
IADACs is bounded (the directed acyclic graph representing the run---called the
\emph{run DAG}---has a~finite number of branches), 
due to IADACs having non-deterministic branching only outside of (non-trivial) SCCs.
(On the other hand, observe that a~DAC having non-deterministic branching between another DAC and itself would generate unboundedly many runs.)
%
%
%
\begin{theorem}
	$\lang(\aut) = \Sigma^\omega \setminus \lang(\but)$.
\end{theorem}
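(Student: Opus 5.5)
We prove the two inclusions separately, using the fact that $\but$ is deterministic. Its unique run on a word $\word$ is the sequence of subsets $S_0 = I$, $S_{i+1} = \bigcup_{r\in S_i}\trans(r,\wordof i)$. So $S_i$ is exactly the set of states reachable in $\aut$ by some run prefix on $\wordof 0\cdots\wordof{i-1}$, and $\but$ accepts $\word$ iff only finitely many steps $i$ have an accepting transition $r\ltr{\wordof i}r'\in\acctrans$ with $r\in S_i$. We may assume, as in the definition of IADACs, that useless states are removed; this changes neither $\langof\aut$ nor acceptance in $\but$, since accepting transitions from useless states can be dropped up front.

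\emph{Step 1: $\langof\aut\cap\langof\but=\emptyset$.} Suppose $\rho$ is an accepting run of $\aut$ on $\word$. Every state $\rho_i$ lies in $S_i$ by induction, and $\rho$ takes an accepting transition from $\rho_i$ at infinitely many positions $i$. Each such position colours the $\but$-transition at step $i$ with \tacc 0. Hence \tacc 0 occurs infinitely often and $\word\notin\langof\but$.

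\emph{Step 2: $\Sigma^\omega\setminus\langof\aut\subseteq\langof\but$.} This is the core step and the main obstacle. Suppose \tacc 0 appears infinitely often on $\but$'s run over $\word$; we must build an accepting run of $\aut$. The positions with accepting transitions give infinitely many finite run prefixes ending in accepting transitions. We organize all run prefixes into the run DAG (levels $S_i$) and show it has only finitely many infinite branches.

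\begin{itemize}
\item \emph{Bounding the branching.} By the almost-determinism condition (after removing useless states), any two distinct successors $p\neq r$ of a state $q$ under the same letter cannot both lie in $q$'s SCC. So every branching step leaves the current SCC of at least one branch. Along any path, the sequence of SCCs visited is monotone in the finite SCC DAG, hence changes at most $|Q|$ times.
\item \emph{Bounding $|S_i|$.} A counting argument then gives $|S_i|\le N$ for a constant $N$ independent of $i$. Concretely, assign to each state in $S_i$ the sequence of SCCs along some prefix leading to it. Within an SCC the prefixes evolve deterministically, so two prefixes with the same SCC sequence and the same entry point into their current SCC coincide. The entry points are determined by the finitely many cross-SCC transitions.
\end{itemize}

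If the sketch of the bound $|S_i|\le N$ proves delicate, first try showing directly that the number of distinct non-merging branches never exceeds the number of SCC-paths in the condensation. Branches that merge can be identified without loss.

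With the bound in hand, Kőnig's lemma applied to the finitely branching, finite-width DAG restricted to prefixes that continue infinitely yields finitely many infinite runs $\rho^1,\dots,\rho^m$ with the following property: every state in $S_i$ for large $i$ lies on one of them. Here we use that useless states are removed and that runs eventually stop branching, since branching requires leaving an SCC, which happens finitely often along any run. Since infinitely many accepting transitions occur from states in $S_i$, pigeonhole gives one $\rho^j$ that takes infinitely many accepting transitions. Thus $\rho^j$ is accepting and $\word\in\langof\aut$.

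Combining both steps yields $\langof\aut=\Sigma^\omega\setminus\langof\but$.
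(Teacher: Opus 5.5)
Your overall plan is the paper's: prove the easy inclusion via $\rho_i\in S_i$, then use ``finitely many runs plus pigeonhole''. Step~1 is correct. The gap is the structural property you take from the IADAC definition. You read it as ``$p$ and $r$ cannot \emph{both} lie in $q$'s SCC'', which only guarantees that \emph{some} branch leaves the SCC. The branch that stays may branch again at every later step, so ``runs eventually stop branching'' does not follow.

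Under your reading the theorem is in fact false. Let $q_0$ be initial with $q_0\xrightarrow{a}q_0$, $q_0\xrightarrow{b}q_0$, $q_0\xrightarrow{a}q_1$, and an accepting loop $q_1\xrightarrow{a}q_1$. The only accepting SCC $\{q_1\}$ satisfies your version of the condition (there are no useless states), and $\langof{\aut}=\Sigma^*a^\omega$. Take $w=a\,b\,a^2\,b\,a^3\,b\cdots\notin\langof{\aut}$. In every block $a^k$ with $k\ge 2$, the current macrostate contains $q_1$ when the second $a$ is read, so that $\but$-transition is coloured. Hence $w\notin\langof{\but}$ either.

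The same example refutes two auxiliary claims of yours:
\begin{itemize}
\item Here $|S_i|\le 2$, and in general $|S_i|\le|Q|$ trivially. Your counting bullet therefore bounds a quantity that is always bounded and is not the relevant one.
\item $q_1\in S_i$ lies on no infinite run. So ``finite width plus K\H{o}nig'' does not yield that every state of $S_i$ eventually lies on one of finitely many runs.
\end{itemize}

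What you need is the strong reading the paper intends. Compare its ``branching only outside SCCs'' and its remark that a DAC branching between itself and another DAC yields unboundedly many runs. The condition is: if $q$ has two distinct $a$-successors that can reach an accepting SCC, then \emph{neither} lies in $q$'s SCC. With this, every branching point sends \emph{all} live branches strictly down the SCC DAG. So each run prefix through such states has at most $|Q|$ branching points. The number of distinct run \emph{prefixes} of each length is then bounded by $|I|\cdot|Q|^{|Q|}$; these are nodes of the unfolded run tree, not elements of $S_i$. This is the paper's bounded-number-of-runs argument.

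You must still handle prefixes that die later on $w$:
\begin{itemize}
\item The number of depth-$n$ nodes with an infinite subtree is non-decreasing and bounded, hence eventually constant. This gives finitely many infinite runs.
\item Each of these runs has finitely many branching points, so beyond some depth no new side branches appear.
\item The finitely many finite side subtrees already present eventually die out.
\end{itemize}
From then on, every prefix ending in an accepting transition lies on one of these finitely many runs, and your pigeonhole step goes through.
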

\begin{proof}[Sketch]
Since all IADAC runs are deterministic, deciding whether $w$ is accepted
amounts to checking whether the runs on $w$ visit $\acctrans$-transitions
only finitely often. By construction, $w$ is not accepted in the IADAC iff
every run of~$\aut$ over~$w$ emits only finitely many $\tacc0$'s;
equivalently, the corresponding run in $\but$ emits $\tacc0$ only finitely
often, as captured by the acceptance condition $\Finof{\tacc 0}$.
\qed\end{proof}

\newcommand{\ass}{\leftarrow}
\newcommand{\addtoset}[2]{{#1} \ass {#1} \cup \{#2\}}
\newcommand{\removefromset}[2]{#1 \ass #1\setminus \{#2\}}
\newcommand{\discovered}{D}
\newcommand{\isempty}{\mathit{Empty}} 
\newcommand{\stack}{S}
\newcommand{\push}{\mathtt{push}}
\newcommand{\falseval}{\mathit{false}}
\newcommand{\trueval}{\mathit{true}}
\newcommand{\pop}{\mathtt{pop}}
\newcommand{\stacktop}{\mathtt{top}}
\newcommand{\entry}{\mathit{Entry}}
\newcommand{\explore}{\mathtt{Explore}}
\newcommand{\explored}{\mathit{Explored}}
\newcommand{\source}{\mathtt{src}}
\newcommand{\target}{\mathtt{tgt}}
\newcommand{\outgoing}{\mathtt{outgoing}}
\newcommand{\algEmptinessTwoCol}[0]{%
	\begin{algorithm}[t]
		\caption{On-the-fly emptiness test for SGRAs}\label{alg:topLevel}
		\begin{multicols}{2}
			\begin{algorithmic}[1]
				\Input an SGRA $\aut = (Q, \trans, I, \{\tacc 0, \ldots, \tcacc 7
				{k-1}\}, \colouring)$
				\Output if $\langof \aut = \emptyset$ then $\trueval$ else $\falseval$
				\Data sets of transitions $\entry$, $\explored$
				\DataEmpty a stack $\stack$ of sets of transitions
				
				\LState $\entry \gets \outgoing(I)$\label{line:initentry}
				\LState $\explored \gets \emptyset$
				\While{$\exists t \in \entry {\setminus} \explored$}\label{line:mainwhile}
				\LState $\explore(t)$
				\EndWhile\popindent%
				\LState \textbf{exit algorithm with} $\trueval$
				
				\Procedure{$\explore$}{$t$}\label{line:exploreproc}
				\LState $\explored \gets \explored \cup \{t\}$\label{line:markexplored}
				\If{$\tacc 0 \in \colouring(t)$}\label{line:badedgetest}
				\LState $\entry \gets \entry \cup \{\outgoing(\target(t))\}$\label{line:addtoentry}
				\Else\label{line:normalexplore}
				\LState $\stack.\push(\{t\})$
				\LState Merge the stack $\stack$ from the lowest occurrence of $\target(t)$ up to top\label{line:merging}
				\If{$\{\tacc 1, \ldots, \tcacc 7 {k-1}\} \subseteq \colouring(\stack.\pop)$}\label{line:buchitest}
				\LState \textbf{exit algorithm with} $\falseval$
				\EndIf\popindent%
				\While{$\exists t' \in \outgoing(\target(\stack.\pop)){\setminus} E$}
				\LState $\explore(t')$
				\EndWhile\popindent %
				\LState $\stack.\pop()$\label{line:popterminal}
				\EndIf\popindent%
				\EndProcedure\popindent%
			\end{algorithmic}
		\end{multicols}
	\end{algorithm}%
}

\newcommand{\algEmptinessOneCol}[0]{
\begin{figure}[t]%
\SetKw{KwMyIf}{if}
\SetKw{KwOr}{or}
\SetKw{KwThen}{then}
\SetKw{KwExit}{exit program}
\SetKw{KwExitAndReturn}{exit program and return}
\SetKwProg{Procedure}{Procedure}{:}{}
\SetInd{0.3em}{1.2em}
\resizebox{\textwidth}{!}{
\begin{minipage}{14cm}
  \begin{algorithm}[H]
    \caption{On-the-fly emptiness test for SGRAs}
    \label{alg:topLevel}
    \KwIn{an SGRA $\aut = (Q, \trans, I, \colours = \{\tacc 0, \tacc 1, \ldots, \tcacc 7
      {k-1}\}, \colouring)$}
    \KwOut{$\trueval$ iff $\langof \aut = \emptyset$, $\falseval$ otherwise}
    \KwData{sets of transitions $\entry$, $\explored$,
      a stack $\stack$ of sets of transitions}
    %
    \BlankLine
    $\entry\ass\outgoing(I)$;\,$\explored \ass \emptyset$;\,$S\ass ()$\;\label{line:initentry}
    \lWhile{$\exists t \in \entry\setminus\explored$}
    {\label{line:mainwhile}
      $\explore(t)$
    }
    \KwRet $\trueval$\;
    \BlankLine
    \Procedure{$\explore(t)$}{\label{line:exploreproc}
      $\addtoset{\explored}{t}$\;\label{line:markexplored}
      \lIf{$ \tacc 0 \in \colouring(t)$} {\label{line:badedgetest}
        $\addtoset{\entry}{\outgoing(\target(t))}$\label{line:addtoentry}
      }
      \Else {\label{line:normalexplore}
        $\stack.\push(\{t\})$\;
        Merge the stack  $\stack$ from the lowest occurrence of $\target(t)$ up to top\;\label{line:merging}
        \lIf{$\{\tacc 1, \ldots, \tcacc 7 {k-1}\} \subseteq \colouring(\stack.\stacktop)$}{\label{line:buchitest}
          \KwExitAndReturn $\falseval$
        }
        \lWhile{$\exists t' \in \outgoing( \target (\stack.\stacktop))\setminus\explored$ } {
          $\explore(t')$
        }
        $\stack.\pop()$\;\label{line:popterminal}
      }
    }
  \end{algorithm}
\end{minipage}
}
\end{figure}
}

\section{Efficient \buchi Inclusion Checking}\label{sec:inclusion}


On the top level, our algorithm for testing the inclusion $\langof{\aut_1}
\subseteq \langof{\aut_2}$ for BAs~$\aut_1$ and~$\aut_2$
checks emptiness of the SGRA~$\aut$ constructed as the intersection of~$\aut_1$
and~the SGRA~$\aut_2^c$ for the complement of~$\aut_2$.
Note that~$\aut$ can be explored by performing the
product construction of~$\aut_1$ with~$\aut_2^c$ on the fly, extending the
colours of $\aut_2^c$ with one more colour that represents the acceptance of~$\aut_1$.
%
%
Since the biggest asymptotic state space explosion comes from generating the
complement~$\aut_2^c$, our goal is to have an emptiness checking
algorithm that is maximally lazy in the sense that it stops generating new
states of~$\aut_2^c$ as soon as the already generated states
are sufficient for establishing the emptiness of~$\aut$.
This precludes the use of algorithms such as~\cite{BaierBD00S19}, which
require the whole automaton to identify its SCCs.
To achieve this goal, below, we describe a~new language emptiness algorithm for SGRAs
that satisfies the laziness condition.




\cref{alg:topLevel} performs on-the-fly emptiness test for SGRAs.
It uses the notation $\source(t)$/$\target(t)$ for the source/target state of a~transition $t$, respectively,  
$\outgoing(q)$ to denote the outgoing transitions of a~state~$q$, and lifts the colouring function $\colouring$ to sets in the usual manner. 
The algorithm is an augmentation of the on-the-fly emptiness check for GBAs by
Couvreur~\cite{Couvreur99}, which may also be seen as a variation of Tarjan's algorithm~\cite{Tarjan} for exploring SCCs.
To avoid technicalities, \cref{alg:topLevel} is based on its high-level presentation.
We emphasize that although this algorithm does not immediately resemble Tarjan's algorithm, the merged sets of transitions might efficiently be implemented similarly as Tarjan's SCCs, using a~technique resembling Tarjan's stack with its low-links.

\algEmptinessOneCol 

The algorithm explores the automaton in a depth-first manner 
from~$I$. 
The stack $\stack$ stores the currently explored path. 
When a back-edge transition $t$ (a~transition to a state already on the stack) is detected, 
the top part of the stack from the element including the target of~$t$ is merged and the united set of transitions is placed on the top of the stack.
The set represents a (non-maximal) SCC. 
The stack thus becomes a~sequence of sets of transitions.
\lnref{line:merging} specifically transforms the stack $\stack$, a sequence of sets of transitions $T_1,\ldots,T_n$ (the top being the rightmost element), into $T_1,\ldots,T_{\ell-1},\bigcup_{i=\ell}^n T_i$ where $\ell$ is the lowest index such that $T_\ell$ contains a~transition $t'$ with $\source(t') = \target(t)$, or it can be also $\target(t') = \target(t)$ if $T_\ell$ has a cycle. 
Next, Couvreur's algorithm concludes non-emptiness when it detects a~GBA cycle, i.e., a~merged set of transitions on top of the stack that contains all colours.
\cref{alg:topLevel}, on the other hand, needs to find a~\emph{simple generalized Rabin accepting cycle}, i.e., one that satisfies the GBA acceptance and does not contain~$\tacc 0$. 
In other words, it is a~GBA accepting cycle with respect to the set $\{\tacc 1, \ldots, \tcacc 7 {k-1}\}$ in the automaton $\aut \setminus \tacc 0$
(the automaton~$\aut$ with $\tacc 0$-transitions removed)
that is still reachable from $I$ in~$\aut$. 
To detect such cycles, we run a variant of Couvreur's algorithm as a sub-procedure, each time in a part of $\aut\setminus \tacc 0$ discovered to be reachable in~$\aut$.
Each run of Couvreur's algorithm is used
\begin{inparaenum}[(i)]
  \item to detect SGRA accepting cycles in a part of $\aut\setminus \tacc 0$
  rooted at some transition~$t$; and
  \item to discover new parts of $\aut\setminus \tacc 0$ reached from the
  explored part by $\tacc 0$-transitions. 
\end{inparaenum}
This is iterated until a~GBA accepting cycle is found or there are no more unexplored reachable parts of $\aut\setminus \tacc 0$. 
%

\cref{alg:topLevel} uses the set $\entry$ of \emph{entry transitions} as a~data
structure initialized by the outgoing transitions of $I$
(\lnref{line:initentry}), 
and applies Couvreur's algorithm to explore parts of $\aut\setminus \tacc 0$
from all transitions of $\entry$, while the iterations of the Couvreur's algorithm
also keep adding transitions to $\entry$ (due to the \textbf{while} loop on
\lnref{line:mainwhile}). 
We modify the internals of Couvreur's algorithm, presented essentially as the
procedure $\explore$ on \lnref{line:exploreproc}, in one point:
When exploring a~$\tacc 0$-transition, 
the normal depth-first exploration (the \textbf{else} branch on \lnref{line:normalexplore}) is bypassed, the transition is only marked as explored (\lnref{line:markexplored}), and the transitions continuing from its target are put to $\entry$ as the entry transitions of newly discovered parts of $\aut\setminus \tacc 0$ (\lnref{line:addtoentry}).   
\begin{theorem}
	\Cref{alg:topLevel} terminates and returns $\falseval$ iff $\langof{\aut} \neq \emptyset$. 
\end{theorem}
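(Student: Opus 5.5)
The proof splits into three parts: termination, soundness (returning $\falseval$ implies $\langof{\aut}\neq\emptyset$), and completeness (if $\langof{\aut}\neq\emptyset$ then the algorithm returns $\falseval$). Throughout we may assume $Q$ is finite, which holds since we only explore the reachable part of the finite product.

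\emph{Termination.} Every call $\explore(t)$ first adds $t$ to $\explored$. Every call site guards the call by $t\notin\explored$: the main loop requires $t\in\entry\setminus\explored$, and the inner loop requires $t'\notin\explored$. Hence $\explore$ is invoked at most once per transition. Merging on \lnref{line:merging} only shrinks the stack and removes nothing from $\explored$. Since $\trans$ is finite, both loops terminate.

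\emph{Soundness.} I would prove two stack invariants for the moments between operations.
\begin{enumerate}
\item Every transition in $\explored$ is reachable from $I$.
\item Each stack element $T_i$ consists of non-$\tacc 0$ transitions, and for the top merged set the transitions form a strongly connected set of transitions, i.e., there is a closed walk using exactly the transitions of that set.
\end{enumerate}
The second invariant is the standard Couvreur invariant. The source of each pushed transition is the target of the transition below it on the DFS path. A back edge to a state occurring in $T_\ell$ therefore closes a cycle through $T_\ell,\ldots,T_n$, and a union of strongly connected sets that share states is again strongly connected. The $\tacc 0$-transitions are never pushed, because the \textbf{if} branch bypasses the push.

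When the test on \lnref{line:buchitest} succeeds, the top set is reachable from $I$ by the first invariant. It admits a closed walk visiting all of its transitions, none of which is coloured $\tacc 0$, and which carries all colours $\tacc 1,\ldots,\tcacc 7{k-1}$. Repeating this walk forever after a reaching prefix gives an accepting run, so $\langof{\aut}\neq\emptyset$.

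\emph{Completeness.} This is the main obstacle. Suppose $\langof{\aut}\neq\emptyset$. Then there is a reachable SCC $C$ of $\aut\setminus\tacc 0$ whose internal transitions carry all colours $\tacc 1,\ldots,\tcacc 7{k-1}$. Assume for contradiction that the algorithm returns $\trueval$. I would argue in three steps.
\begin{enumerate}
\item At termination, every transition reachable from $I$ lies in $\explored$. Along any path from $I$, each transition is either the entry (in $\outgoing(I)$), follows an explored non-$\tacc 0$ transition and is thus handled by the inner while loop before the pop, or follows an explored $\tacc 0$-transition and is thus added to $\entry$. Since the main loop exhausts $\entry$, all of these are explored.
\item Consider the first call $\explore(t)$ with $t\in\restrof{\trans}{C}$. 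All transitions of $C$ are reached from $t$ by non-$\tacc 0$ paths that stay in $C$. The key lemma is a Couvreur/Tarjan-style argument: before the stack element containing $t$ is popped, every transition of $\restrof{\trans}{C}$ is explored within this DFS and merged into a single stack element. Here I must check that no transition of $C$ was already explored in an earlier, separate DFS, which holds because $t$ is the first. I must also check that a back edge to a state of $C$ whose stack element is below always triggers merging. The subtle point is the clause ``lowest occurrence of $\target(t)$'', which must cover states that occur only as targets inside merged sets; the paper's convention handles this via the alternative $\target(t')=\target(t)$.
\item Once that merged element is on top, it contains all colours $\tacc 1,\ldots,\tcacc 7{k-1}$, so the test on \lnref{line:buchitest} succeeds at the last merge. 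This contradicts the assumption that the algorithm returned $\trueval$.
\end{enumerate}
I would import the correctness of Couvreur's algorithm for the restricted subgraph $\aut\setminus\tacc 0$ explored from $t$, adapted to treat $\tacc 0$-transitions as dead ends. This adaptation is sound because removing those transitions does not affect $C$'s internal transitions.
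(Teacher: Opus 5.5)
Your structure matches the paper's sketch: termination via $\explored$, soundness from a reachable $\tacc 0$-free cycle on top of the stack, and completeness by reduction to Couvreur's algorithm on $\aut\setminus\tacc 0$. However, your soundness step has a genuine gap, and the paper's sketch shares it.

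The test on \lnref{line:buchitest} runs after \emph{every} push, not only after a merge. If $\target(t)$ has no occurrence on the stack, nothing is merged and $\stack.\stacktop=\{t\}$. That set admits a closed walk only if $t$ is a self-loop. Your invariant only speaks about merged sets, so the claim ``the top set admits a closed walk'' is unjustified in this case. For the pseudocode as written it is in fact false. Take states $a$ (initial) and $b$, $k=2$, and transitions $a\to b$ coloured $\{\tacc 1\}$ and $b\to a$ coloured $\{\tacc 0\}$; both lie inside an SCC, so the paper's colouring convention is respected. The only run sees both transitions infinitely often, so $\langof{\aut}=\emptyset$. Yet $\explore(a\to b)$ pushes $\{a\to b\}$ and finds no occurrence of $b$: the singleton's source is $a$ and it has no cycle. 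It therefore merges nothing and returns $\falseval$, because $\{\tacc 1\}\subseteq\colouring(\{a\to b\})$.

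What you can prove is the statement for the intended algorithm, where the test is performed only when \lnref{line:merging} actually closed a cycle, i.e., the index $\ell$ exists (this includes self-loops). This matches the paper's prose about ``a merged set \ldots\ that contains all colours''. With that guard, your invariant yields the closed walk. Completeness is unaffected: the element containing $t$ grows only through merges, each immediately followed by the test, so the test runs on the merge that absorbs the last transition of $\restrof{\trans}{C}$.

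A smaller point concerns importing Couvreur's correctness. \lnref{line:popterminal} pops unconditionally, whereas Couvreur's algorithm pops the top component only in the call that pushed its lowest constituent. Already on the two-transition cycle $a\to b\to a$, the inner call pops the merged set; the outer call then evaluates $\stack.\stacktop$ and pops on an empty stack. You need a bridging observation. When a call's \textbf{while} loop ends, all transitions leaving the targets of the top element are explored. So under Couvreur's discipline, the calls down to the root merely return, and the root pops that same element. Hence, reading operations on an empty stack as no-ops, the literal algorithm performs the same sequence of explorations, merges and tests as Couvreur's, and your Tarjan-style key lemma transfers.
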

\begin{proof}[Sketch]
\Cref{alg:topLevel} terminates since each visited transition is stored in $\explored$ and never revisited, and $\aut$ has finitely many transitions.
The proof follows the proof of Couvreur's algorithm. We show that $\aut$ has an accepting run iff \Cref{alg:topLevel} returns $\falseval$. First, $\langof \aut \neq \emptyset$ iff $\aut\setminus \tacc 0$ contains a reachable cycle visiting all colours $\tacc 1,\ldots,\tcacc 7 {k-1}$. Whenever a transition $t$ has colour $\tacc 0$, the algorithm stops the current DFS branch and schedules $\target(t)$ as a new entry point (Line~6). Since all explored transitions are reachable from an initial state, every state in $\entry$ is reachable; so the search effectively explores $\aut\setminus\tacc 0$.
Each newly visited transition is pushed onto $\stack$ (Line~8). If it closes a cycle, namely if $\target(t)$ equals $\source(t')$ or $\target(t')$ for some explored $t'$ on $\stack$, the colours on that cycle are collected and stored in $\stack.\stacktop$ (Line~9).
If the algorithm returns $\falseval$, then $\stack.\stacktop$ holds all colours $\tacc 1,\ldots,\tcacc 7 {k-1}$ (Line~10). So a reachable cycle in $\aut\setminus\tacc 0$ visiting all non-$\tacc 0$ colours has been found, satisfying the accepting condition; therefore $\aut$ has an accepting run.
Conversely, if $\aut$ has an accepting run, then $\aut\setminus\tacc 0$ contains a reachable cycle visiting all non-$\tacc 0$ colours. Once the search reaches the cycle, it coincides with Couvreur's algorithm on $\aut\setminus\tacc 0$ and must discover it. So Line~10 is eventually triggered, and \Cref{alg:topLevel} returns $\falseval$.
\qed
\end{proof}
\newcommand{\figScatterCompl}[0]{
	\begin{figure}[t]
		\centering
		\begin{subfigure}{0.32\textwidth}
			\centering
			\includegraphics[width=3.8cm,alt={A log-log scatter plot comparing the state counts of complement automata generated by Kofola (x-axis) and Spot (y-axis), ranging from approximately 3 to 2×10⁵. Dashed lines mark the out-of-resources limit on both axes. A diagonal dashed line represents equal performance; points above it indicate Spot produces larger automata than Kofola. Data points are coloured by benchmark source: purple (SOBC), yellow (Seminator), red (AutomizerC), light blue (LDBA4LTL), dark blue (AutoHyperC), green (PecanC), and orange (S1S).
The large majority of purple and yellow points cluster densely above the diagonal at low-to-mid values (roughly 3–10³), indicating Kofola frequently produces smaller automata than Spot on SOBC and Seminator benchmarks. At higher state counts (10³–10⁵), red (AutomizerC) and green (PecanC) points appear prominently along or near the diagonal, with several red outliers above it, suggesting Spot occasionally yields larger automata on those benchmarks as well. A light blue (LDBA4LTL) point sits notably high on the y-axis near the out-of-resources limit, while the corresponding Kofola value is comparatively small. Marginal tick marks along the top and right edges show the per-axis distribution of values, coloured by benchmark source.}]{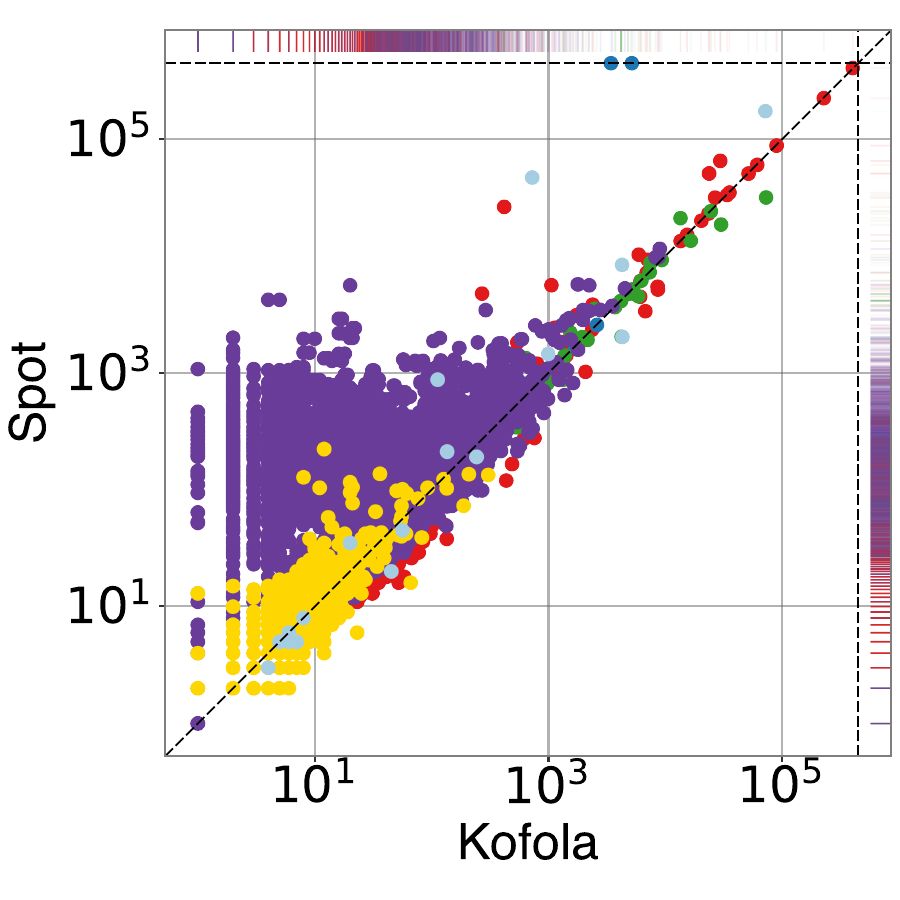}
		\end{subfigure}
		\begin{subfigure}{0.32\textwidth}
			\centering
			\includegraphics[width=3.8cm,alt={A log-log scatter plot comparing the state counts of complement automata generated by Kofola (x-axis) and KofolaOld (y-axis), ranging from approximately 3 to 2×10⁵. Dashed lines mark the out-of-resources limit on both axes. A diagonal dashed line represents equal performance; points above it indicate KofolaOld produces larger automata than Kofola. Data points are coloured by benchmark source: purple (SOBC), yellow (Seminator), red (AutomizerC), light blue (LDBA4LTL), dark blue (AutoHyperC), green (PecanC), and orange (S1S).
Purple and yellow points form a dense cluster at low-to-mid values (roughly 3–10³) sitting on or slightly above the diagonal, indicating the two versions perform very similarly on SOBC and Seminator benchmarks, with KofolaOld marginally larger in many cases. Green (PecanC) points show the most divergence — several appear well above the diagonal, including one outlier near (10, 4×10⁵), indicating KofolaOld produces dramatically larger automata than Kofola on some PecanC instances. Red (AutomizerC) points scatter along the diagonal at higher state counts, with a number hitting the out-of-resources limit on the KofolaOld axis while Kofola remains within bounds. Light blue (LDBA4LTL) and dark blue (AutoHyperC) points are sparse but generally near the diagonal. Marginal tick marks along the top and right edges show the per-axis distribution of values, coloured by benchmark source.}]{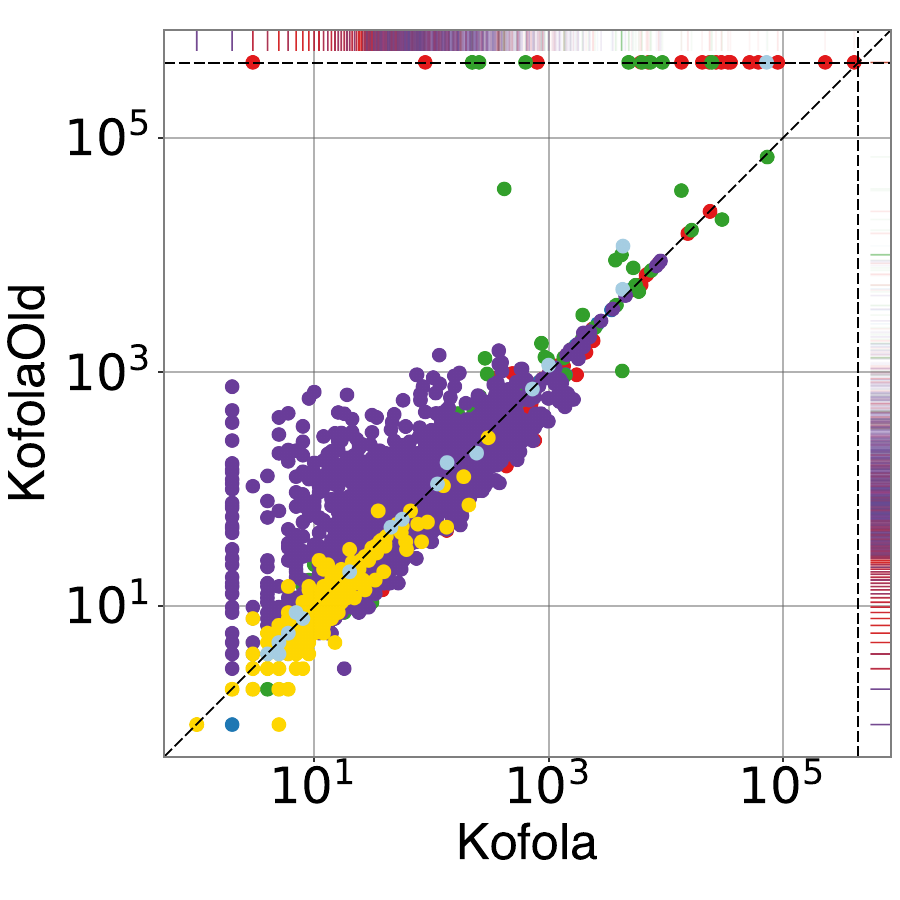}
		\end{subfigure}
		\begin{subfigure}{0.32\textwidth}
			\centering
			\includegraphics[width=3.8cm,alt={A log-log scatter plot comparing the state counts of complement automata generated by Kofola (x-axis) and Ranker (y-axis), ranging from approximately 3 to 2×10⁵. Dashed lines mark the out-of-resources limit on both axes. A diagonal dashed line represents equal performance; points above it indicate Ranker produces larger automata than Kofola. Data points are coloured by benchmark source: purple (SOBC), yellow (Seminator), red (AutomizerC), light blue (LDBA4LTL), dark blue (AutoHyperC), green (PecanC), and orange (S1S).
This plot shows the most pronounced advantage for Kofola of the three comparisons. A large number of purple and yellow points sit clearly above the diagonal across all value ranges, and a dense vertical streak of purple points at low Kofola values (around 3–5) with Ranker values spanning 10²–10⁴ indicates many cases where Ranker produces far larger automata than Kofola. Red (AutomizerC) points are scattered broadly above the diagonal, particularly at mid-to-high state counts. Light blue (LDBA4LTL) points also appear above the diagonal at mid-range values. A notable single teal (AutoHyperC) point sits well below the diagonal near (10, 3), indicating one instance where Ranker produces a much smaller automaton. The top marginal strip is heavily populated, reflecting that Ranker frequently hits the out-of-resources limit while Kofola does not. Marginal tick marks along the top and right edges show the per-axis distribution of values, coloured by benchmark source.}]{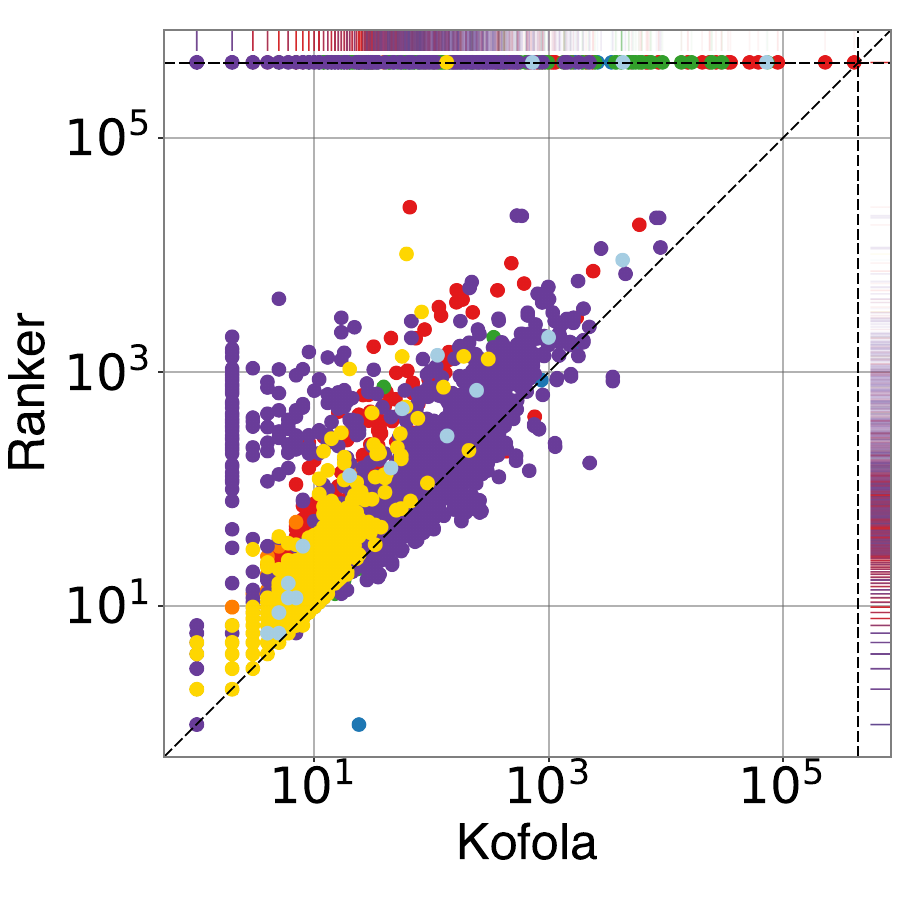}
		\end{subfigure}
		\caption{
      Comparison of state counts of complement automata generated
      by \kofola and other tools; axes are logarithmic.
			Dashed lines are out-of-resources limit and rug plots on margins show distribution of data points.
			Colour indicates benchmark source:
      \stateofbuchi~\RGBcircle{112,60,163}, \seminator~\RGBcircle{243,216,0},
      \automizerbenchc~\RGBcircle{218,42,40}, \ldbaforltl~\RGBcircle{179,211,231},
      \autohyperbenchc~\RGBcircle{54,127,191}, \pecanbenchc~\RGBcircle{76,172,23},
      S1S~\RGBcircle{248,142,0}.
		}\label{fig:scatter-compl}
	\end{figure}
}

\newcommand{\figScatters}[0]{
	\begin{figure}[t]
		\centering
		\begin{subfigure}{0.32\textwidth}
			\centering
			\includegraphics[width=3.8cm,alt={A log-log scatter plot comparing the runtime (in seconds) of inclusion checking by Kofola (x-axis) and Spot(Det) (y-axis), ranging from approximately 0.01 to 100 seconds. Dashed lines mark the out-of-resources limit on both axes. A diagonal dashed line represents equal runtime; points above it indicate Spot(Det) is slower than Kofola, and points below indicate Spot(Det) is faster. Data points are coloured by benchmark source: red (AutoHyperI), blue (Concur), orange (PecanI), and green (AutomizerI).
The majority of points, predominantly green (AutomizerI), cluster in the lower-left region (0.01–1 seconds) near or slightly above the diagonal, suggesting broadly comparable runtimes on those benchmarks with a mild tendency for Spot(Det) to be slower. Several green and red points appear well above the diagonal at low Kofola runtimes (0.01–0.1 s), indicating cases where Spot(Det) is substantially slower. A dense top marginal strip shows many instances — across all benchmark sources — where Spot(Det) hits the out-of-resources limit while Kofola completes within bounds. Conversely, a small number of blue (Concur) and orange (PecanI) points fall below the diagonal at mid-to-high Kofola runtimes (1–100 s), indicating a few cases where Spot(Det) is faster. Overall, Kofola demonstrates a clear runtime advantage over Spot(Det), particularly on AutomizerI and AutoHyperI benchmarks.}]{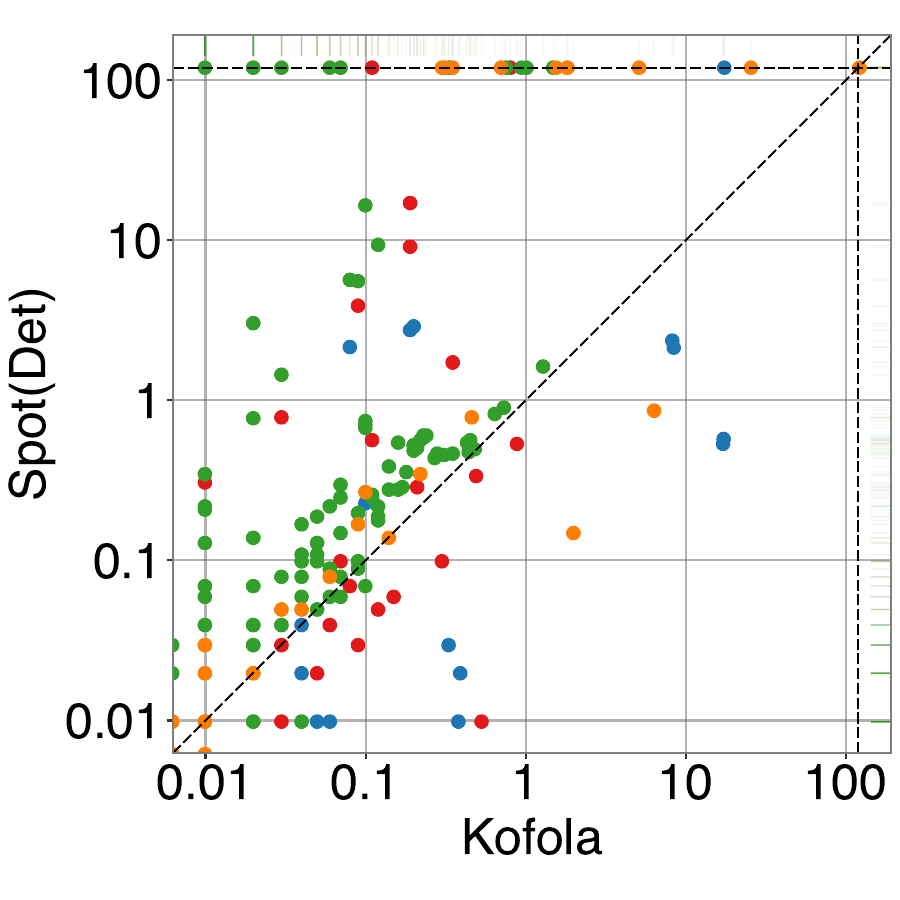}
		\end{subfigure}
		\begin{subfigure}{0.32\textwidth}
			\centering
			\includegraphics[width=3.8cm,alt={A log-log scatter plot comparing the runtime (in seconds) of inclusion checking by Kofola (x-axis) and Forklift (y-axis), ranging from approximately 0.01 to 100 seconds. Dashed lines mark the out-of-resources limit on both axes. A diagonal dashed line represents equal runtime; points above it indicate Forklift is slower than Kofola, and points below indicate Forklift is faster. Data points are coloured by benchmark source: red (AutoHyperI), blue (Concur), orange (PecanI), and green (AutomizerI).
This plot shows the most mixed performance of the runtime comparisons. A dense vertical band of points at very low Kofola runtimes (around 0.01 s) spans a wide range of Forklift values (0.01–100 s), indicating that for many benchmarks Kofola solves instances almost instantly while Forklift takes considerably longer. Red (AutoHyperI) points are notably clustered above the diagonal at low-to-mid Kofola runtimes, suggesting Forklift is consistently slower on those benchmarks. However, a substantial number of green (AutomizerI) and orange (PecanI) points fall below the diagonal at low Kofola runtimes (0.01–0.1 s), indicating cases where Forklift is faster. The right marginal strip shows many green instances where Kofola hits the out-of-resources limit while Forklift completes, in contrast to the top marginal strip where the reverse occurs across multiple benchmark sources. Overall, neither tool dominates clearly — Kofola has an advantage on AutoHyperI benchmarks, while Forklift performs better on a subset of AutomizerI instances.}]{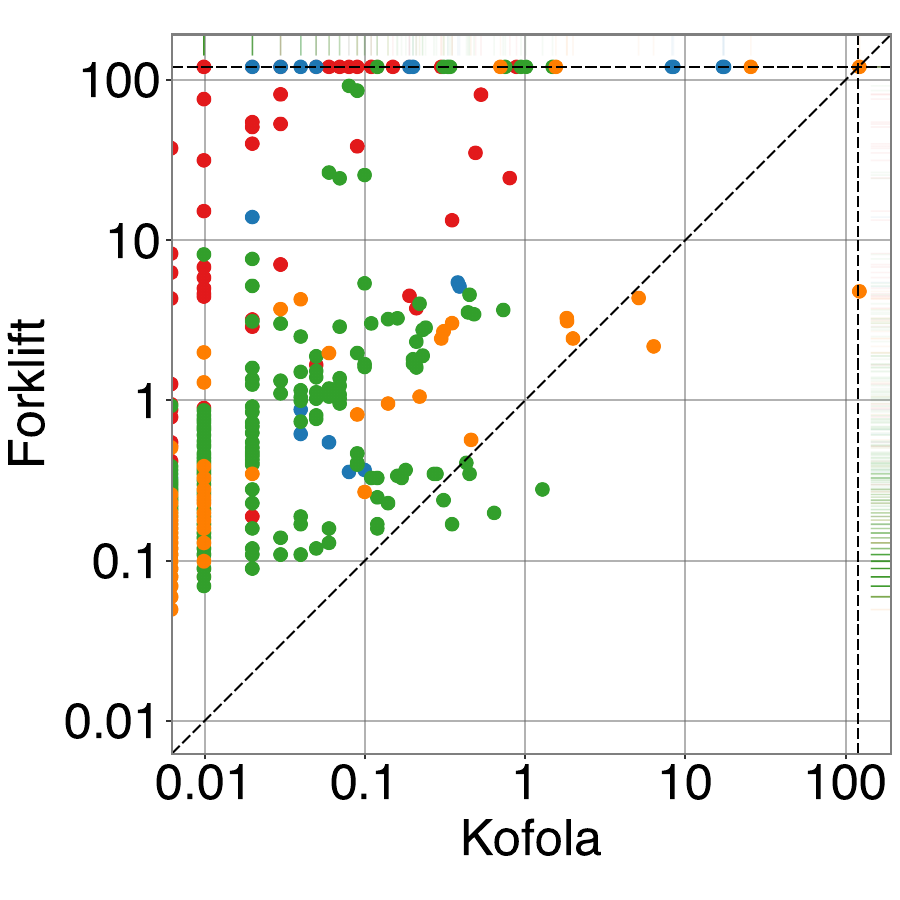}
		\end{subfigure}
		\begin{subfigure}{0.32\textwidth}
			\centering
			\includegraphics[width=3.8cm,alt={A log-log scatter plot comparing the runtime (in seconds) of inclusion checking by Kofola (x-axis) and Rabit (y-axis), ranging from approximately 0.01 to 100 seconds. Dashed lines mark the out-of-resources limit on both axes. A diagonal dashed line represents equal runtime; points above it indicate Rabit is slower than Kofola, and points below indicate Rabit is faster. Data points are coloured by benchmark source: red (AutoHyperI), blue (Concur), orange (PecanI), and green (AutomizerI).
The plot shows a strong overall advantage for Kofola. A very dense vertical band at the lowest Kofola runtimes (around 0.01 s) spans Rabit values from 0.1 to the out-of-resources limit, indicating many benchmarks where Kofola finishes near-instantly while Rabit is orders of magnitude slower or fails entirely. Red (AutoHyperI) points are almost entirely above the diagonal, often reaching the out-of-resources ceiling, indicating Rabit consistently struggles on those benchmarks. Green (AutomizerI) and orange (PecanI) points also predominantly sit above the diagonal. Blue (Concur) points are sparse but notable — several appear below the diagonal at mid-to-high Kofola runtimes (1–20 s), suggesting Rabit is faster on a small number of Concur instances. The right marginal strip contains many green tick marks, reflecting cases where Kofola times out while Rabit completes, though this is clearly the minority pattern. Overall, Kofola is substantially faster than Rabit across most benchmark sources.}]{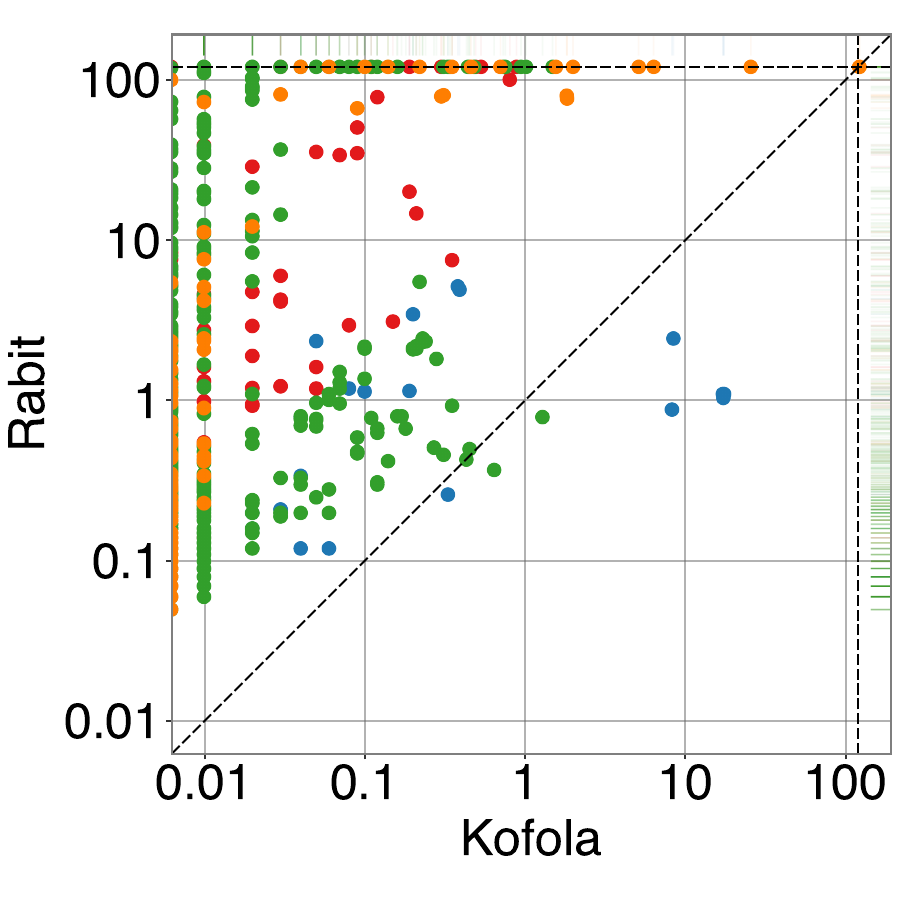}
		\end{subfigure}
		\caption{
			Comparison of runtime of inclusion checking for \kofola and other tools.
			Times are in seconds, axes are logarithmic.
			Dashed lines represent out-of-resources limit and rug plots on margins show distribution of data points.
			The colour of a~point indicates the source of the benchmark:
      \autohyperbenchi~\RGBcircle{223,38,37}, \concur~\RGBcircle{55,129,192},
      \pecanbenchi~\RGBcircle{248,142,0}, \automizerbenchi~\RGBcircle{76,172,23}.
		}\label{fig:scatter-incl}
	\end{figure}
}

\newcommand{\tabStats}[0]{
\begin{table}[t]
\begin{minipage}[t]{0.42\textwidth}
\caption{Statistics for complementation. The column \redfattimes{} has
  numbers of automata (out of 31,273) for which the tool ran out of resources
  (time or memory), \textbf{avg} contains the average number of states of the
  output, \textbf{time} the total time in seconds, and \textbf{unsup} the
  number of input automata with unsupported acceptance condition.}
\label{tab:compl}
\resizebox{\textwidth+2pt}{!}{\hspace*{-10pt}\newcolumntype{h}{>{\columncolor{Gray!30}}r}
\newcolumntype{f}{>{\columncolor{Gray!30}}l}
\newcolumntype{g}{>{\columncolor{Gray!30}}c}
\renewcommand{\arraystretch}{1.15}
\begin{tabular}{lhrhrh}
\toprule
  \multicolumn{1}{c}{\bf tool}             & \multicolumn{1}{g}{~~\redfattimes~~} & \multicolumn{1}{c}{\bf avg}&     \multicolumn{1}{g}{\bf ~~time\,~}& \multicolumn{1}{c}{\bf ~unsup~~} \\
\midrule
\rowcolor{GreenYellow}\kofola & 0   & 79.24    & 1,069& 0 \\
\spot                         & 2   & 115.38   & 199& 0 \\
\kofolaslice                  & 4   & ~~166.03 & 1,376& 0 \\
\kofolaold                    & 44  & 42.74    & 2,471& 0 \\
\owl                          & 104 & 37.39    & 1,240& 507 \\
\ranker                       & 443 & 45.80    & 22,817& 1,336 \\
\bottomrule
\end{tabular}
%
%
}
\end{minipage}
\hfill
\begin{minipage}[t]{0.56\textwidth}
\caption{Statistics for inclusion checking.
  Column \textbf{unsolved} has
  numbers of inclusion problems (out of 1,014) for which the tool ran out of
  resources (time/memory), \textbf{time} contains the total time (in seconds) for the
  solved problems, \textbf{wins}/\textbf{losses} contains the number of
  problems where \kofola strictly won/lost over the tool, and \textbf{missing} contains
  for how many problems we could not convert the inputs into
  \texttt{.ba} files.}
\label{tab:incl}
\resizebox{\textwidth+2pt}{!}{\hspace*{-10pt}\newcolumntype{h}{>{\columncolor{Gray!30}}r}
\newcolumntype{f}{>{\columncolor{Gray!30}}l}
\newcolumntype{g}{>{\columncolor{Gray!30}}c}
\renewcommand{\arraystretch}{1.15}
\begin{tabular}{lhrhrhr}
\toprule
  \multicolumn{1}{c}{\bf tool}  & \multicolumn{1}{g}{\bf ~unsolved~~} & \multicolumn{1}{c}{\bf time}  & \multicolumn{1}{g}{\bf wins} & \multicolumn{1}{c}{\bf ~losses~~} & \multicolumn{1}{g}{\bf ~missing~~} \\
\midrule
\rowcolor{GreenYellow}\kofola & 2                                &   141  &         &     & 0 \\
\spotdet                      & 28                               &   122  & 263     &  56   & 0 \\
\forklift                     & 35                               & 1,273  & 997     &  9    & 8 \\
\spotforq                     & 54                               & 1,134  & 307     &  89   & 0 \\
\rabit                        & 57                               & 4,300  & 996     &  8    & 8 \\
\bait                         & 64                               & ~~1,981  & ~~1,000    & 5     & 8 \\
\bottomrule
\end{tabular}
}
\end{minipage}
\end{table}
}

\section{Experimental Evaluation}\label{sec:experiments}

We have experimentally evaluated the complementation and inclusion checking
capabilities of \kofola and compared it with state-of-the-art tools.
All experiments were run on an Ubuntu GNU/Linux 24.04 virtual machine with
64\,GiB RAM running on an AMD EPYC 9124 CPU.
The timeout was set to 120\,s.
Correctness of answers was evaluated by cross-comparison between the tools.


\subsection{Complementation}\label{sec:label}

\begin{table}[t]
\begin{minipage}[t]{0.42\textwidth}
\caption{Statistics for complementation. The column \redfattimes{} has
  numbers of automata (out of 31,273) for which the tool ran out of resources
  (time or memory), \textbf{avg} contains the average number of states of the
  output, \textbf{time} the total time in seconds, and \textbf{unsup} the
  number of input automata with unsupported acceptance condition.}
\label{tab:compl}
\resizebox{\textwidth+2pt}{!}{\hspace*{-10pt}\newcolumntype{h}{>{\columncolor{Gray!30}}r}
\newcolumntype{f}{>{\columncolor{Gray!30}}l}
\newcolumntype{g}{>{\columncolor{Gray!30}}c}
\renewcommand{\arraystretch}{1.15}
\begin{tabular}{lhrhrh}
\toprule
  \multicolumn{1}{c}{\bf tool}             & \multicolumn{1}{g}{~~\redfattimes~~} & \multicolumn{1}{c}{\bf avg}&     \multicolumn{1}{g}{\bf ~~time\,~}& \multicolumn{1}{c}{\bf ~unsup~~} \\
\midrule
\rowcolor{GreenYellow}\kofola & 0   & 79.24    & 1,069& 0 \\
\spot                         & 2   & 115.38   & 199& 0 \\
\kofolaslice                  & 4   & ~~166.03 & 1,376& 0 \\
\kofolaold                    & 44  & 42.74    & 2,471& 0 \\
\owl                          & 104 & 37.39    & 1,240& 507 \\
\ranker                       & 443 & 45.80    & 22,817& 1,336 \\
\bottomrule
\end{tabular}
%
%
}
\end{minipage}
\hfill
\begin{minipage}[t]{0.56\textwidth}
\caption{Statistics for inclusion checking.
  Column \textbf{unsolved} has
  numbers of inclusion problems (out of 1,014) for which the tool ran out of
  resources (time/memory), \textbf{time} contains the total time (in seconds) for the
  solved problems, \textbf{wins}/\textbf{losses} contains the number of
  problems where \kofola strictly won/lost over the tool, and \textbf{missing} contains
  for how many problems we could not convert the inputs into
  \texttt{.ba} files.}
\label{tab:incl}
\resizebox{\textwidth+2pt}{!}{\hspace*{-10pt}\newcolumntype{h}{>{\columncolor{Gray!30}}r}
\newcolumntype{f}{>{\columncolor{Gray!30}}l}
\newcolumntype{g}{>{\columncolor{Gray!30}}c}
\renewcommand{\arraystretch}{1.15}
\begin{tabular}{lhrhrhr}
\toprule
  \multicolumn{1}{c}{\bf tool}  & \multicolumn{1}{g}{\bf ~unsolved~~} & \multicolumn{1}{c}{\bf time}  & \multicolumn{1}{g}{\bf wins} & \multicolumn{1}{c}{\bf ~losses~~} & \multicolumn{1}{g}{\bf ~missing~~} \\
\midrule
\rowcolor{GreenYellow}\kofola & 2                                &   141  &         &     & 0 \\
\spotdet                      & 28                               &   122  & 263     &  56   & 0 \\
\forklift                     & 35                               & 1,273  & 997     &  9    & 8 \\
\spotforq                     & 54                               & 1,134  & 307     &  89   & 0 \\
\rabit                        & 57                               & 4,300  & 996     &  8    & 8 \\
\bait                         & 64                               & ~~1,981  & ~~1,000    & 5     & 8 \\
\bottomrule
\end{tabular}
}
\end{minipage}
\end{table}

\noindent\textbf{Used Tools.}
For complementation, the default setting of \kofola uses the
determinization-based approach (see \cref{sec:partial-compl}) for
complementing NACs.
We use \kofolaslice to denote the version that complements NACs using the
slice-based construction.
Before outputting a~result, we reduce it using \spot's postprocessor with level
set to \texttt{Low} (this performs
removal of useless states and one run of the simulation-based
reduction~\cite{BustanG00}).
We compared \kofola against state-of-the-art tools for \buchi complementation:
\spot (version~2.14.2)~\cite{Duret-LutzRCRAS22} (using the command line tool \texttt{autfilt --complement}),
\ranker~\cite{HavlenaLS22b},
\owl~\cite{KretinskyMS18} (we ran its determinization into a~parity automaton,
where complementation is trivial), and
the original version of \kofola from~\cite{HavlenaLLST23}, named~\kofolaold hereafter.
\kofolaold uses the same post-processing as \kofola and \ranker uses a~custom
post-processing, which is not exactly the same, but similar to the one in
\spot.

\smallskip
\noindent\textbf{Benchmarks.}
We collected as many BAs as we could, obtaining, in total, 31,273
automata~\cite{automata-benchmarks}.
These come from various sources:
\begin{inparaenum}[(i)]
\item  915 BAs from checking program termination by \textsc{Ultimate Automizer}~\cite{ChenHLLTTZ18} (\automizerbenchc),
\item  6,301 automata from solving the first-order logic of Sturmian words
  by \pecan~\cite{pecan,DBLP:journals/lmcs/HieronymiMOSSS24} (\pecanbenchc; these are not only BAs, but also,
  e.g,. co-\buchi and parity automata),
\item  72 BAs from model checking of HyperLTL properties using \autohyper~\cite{BeutnerF23} (\autohyperbenchc),
\item  370 BAs from deciding S1S formulae~\cite{HavlenaLS21} (\sones),
\item  18 BAs from LTL to limit-deterministic BA translation~\cite{SickertEJK16}~(\ldbaforltl),
\item  1,721 BAs from translation from real-world and random LTL formulae~\cite{BlahoudekDS20}~(\seminator), and
\item  21,876 randomly generated BAs~\cite{TsaiFVT14}~(\stateofbuchi).
\end{inparaenum}

\smallskip
\noindent\textbf{Results.}
We compared the sizes of the complement automata that the tools generated.
We let the tools output automata with any acceptance condition (\kofola and
\spot could exploit this to obtain smaller automata, whereas \ranker always
generates automata with the \buchi condition).
Statistics of the experiment are in \cref{tab:compl} and interesting scatter plots
are in \cref{fig:scatter-compl} (the scatter plot for \owl looks similar as the
one for \spot with more unsolved cases).
We do not include the median, which was~2~for all tools, in the table.
We can see that \kofola was the only tool that could complement all input
automata.

The second tool, \spot, could complement all but two automata (it ran out of
memory for those), however the average size of the output is much bigger.
We can see this also in the plot comparing the two tools, where \kofola can get
much smaller complements for many \stateofbuchi automata.
For complementation, \kofolaslice gives worse average results than \kofola.
There are, however, cases where \kofolaslice outputs a~smaller
result.
The improvement of \kofola over \kofolaold can be seen mainly in the number of
solved cases.
While the average size is better for \kofolaold, this should be considered in
the context that \kofola can solve the 44 benchmarks that \kofolaold
could not, often with a~large output, which skews the average.
From the plot, we can see that \kofolaold could in many cases output a~slightly
smaller automaton than \kofola.  This is due to the settings of the algorithms,
which we optimized in \kofola to maximize the number of solved cases; as
a~consequence, sometimes the algorithms perform slightly worse.
The dominance of \kofola over \ranker and \owl is clear from the number of solved cases
and the scatter plot (\ranker has a~lower average since
it cannot solve harder benchmarks).
1,336 (for \ranker) and 507 (for \owl) input automata (all from \pecan) used an
unsupported acceptance condition.

Our procedure for complementing IADACs helped in 225 cases, reducing
the number of states of the output by an average of 5 (among the cases where it
had effect).
The most significant improvement was observed on one automaton from the
\seminator benchmark, where the number of states decreased by 236.


\figScatterCompl  

\subsection{Inclusion Checking}\label{sec:experiments:inclusion}

\noindent\textbf{Used Tools.}
For inclusion, \kofola uses the slice-based
approach (\cref{sec:partial-compl}) for complementing NACs, which is more
suitable for on-the-fly exploration.
We evaluated \kofola against state-of-the-art inclusion checkers
\bait~\cite{DoveriGPR21}, \forklift~\cite{forklift}, \rabit~\cite{DBLP:journals/lmcs/ClementeM19}
v2.5.1 with argument \texttt{-fast}, and \spot 2.14.2~\cite{Duret-LutzRCRAS22}
(using \texttt{autfilt --included-in}).
For \spot, we tried two versions:
determinization-based~\cite{Safra88,Piterman07,Redziejowski12} (\spotdet)
and well-quasiorder-based~\cite{forklift} (\spotforq).

\smallskip
\noindent\textbf{Benchmarks.}
We used inclusion problems involving BAs given in the \texttt{HOA}
format~\cite{BabiakBDKKM0S15} from all available sources we are aware of
that are practically motivated (the
number of automata pairs is given in parentheses):
\begin{inparaenum}[(i)]
  \item inclusion problems emerging from model checking of hyperproperties
    using the tool \autohyper~\cite{BeutnerF23}, denoted as \autohyperbenchi (36; max.\ 66,051,
    avg.\ 2,059 states),
  \item instances occurring in program termination checking using
  \textsc{Ultimate} \automizer~\cite{ChenHLLTTZ18}, denoted as \automizerbenchi (404; max.\ 88,304, avg.\ 904 states), 
  \item inclusion problems coming from verification of concurrent
    systems~\cite{AbdullaCCHHMV11} (13~instances denoted as \concur, omitting one benchmark from the source due
    to a~missing counterpart automaton; max.\ 1,532, avg.\ 589 states), and
  \item logical implication tasks in word combinatorics from
    \pecan~\cite{pecan,DBLP:journals/lmcs/HieronymiMOSSS24}, denoted as \pecanbenchi (54 instances where it was clear how
    to pair them; max.\ 712,184, avg.\ 7,729 states).
\end{inparaenum}
%
Without performing any pre-selection on the existing benchmarks of the literature, 98\,\% of the 507 BAs appearing on the right-hand side of the inclusion (i.e., those that are being complemented) are elevator automata.
This emphasizes the importance of exploiting automata structure in inclusion checking.
From the automata on the left-hand side, 76\,\% were elevator automata (in
total, 87\,\% of the involved BAs were elevator).
To get more interesting examples for inclusion checking, we also
swapped the sides of the benchmarks for a~total of 1,014 inclusion problems
(the vast majority of these benchmarks were non-trivial, often harder than the
original problems).
For tools not supporting the \texttt{HOA} format (\bait,
\forklift, and \rabit), we employed \spot to convert the input automata into
the \texttt{.ba} format.
For four of the original inclusion problems (eight benchmarks in total), this was not possible (the translation took too many resources).
Overall, inclusion holds in 12\,\% of the benchmarks.


\figScatters  


\smallskip

\noindent\textbf{Results.}
We give statistics for inclusion checking in \cref{tab:incl} and interesting scatter plots in
\cref{fig:scatter-incl}.
\kofola won by far on the number of solved instances and the total time.
The second tool was \spotdet, whose time is slightly lower, but solved 26 less instances.
The scatter plot shows that it can sometimes beat \kofola.
Ramsey-based tools follow with
run times being an order of magnitude worse.
The best of them is \forklift (surprisingly to us, it solves more problems
than \spotforq) with \spotforq, \rabit, and \bait trailing.
From the scatter plots (we omit \bait, which mostly
behaves worse than \forklift and \spotforq), the tool most complementary to \kofola is
\spot.


\begin{credits}
\subsubsection{\ackname}
We thank the anonymous reviewers for their constructive feedback.
This work was supported in part by
the National Natural Science Foundation of China (Grant No.\ 62102407), the CAS Project for Young Scientists in Basic Research (Grant No.\ YSBR-040),
the Czech Science Foundation projects \mbox{26-22640S} and \mbox{25-17934S}, and
the FIT BUT internal project FIT-S-26-9011.
\subsubsection{\discintname}
The authors have no competing interests.
\subsubsection{Data Availability Statement.}
An environment with the tools and data used for the experimental evaluation in
the current study is available at~\cite{artifact}.
\end{credits}

\bibliographystyle{splncs04}
\bibliography{literature.bib}

\clearpage
\appendix 

\crefalias{section}{appendix}

\end{document}